\documentclass[%
 aip,
 amsmath,amssymb,
 reprint,%
onecolumn,
]{revtex4-1}
\usepackage{color}
\usepackage{cancel}

\usepackage{graphicx}
\usepackage{dcolumn}
\usepackage{bm}

\usepackage{lineno}

\usepackage[utf8]{inputenc}
\usepackage[T1]{fontenc}
\usepackage{mathptmx}
\usepackage{amsthm}

\newcommand{\Ron}{R_{\text{on}}}                     
\newcommand{\Roff}{R_{\text{off}}}   

\newcommand{\comm}[1]{}

\newcommand{\colvec}[2][.8]{%
  \scalebox{#1}{%
    \renewcommand{\arraystretch}{.8}%
    $\begin{pmatrix}#2\end{pmatrix}$%
  }
}
\newtheorem{theorem}{Theorem}

\begin{document}

\preprint{AIP/123-QED}

\title[Memristive circuits]{Network analysis of memristive device circuits: dynamics, stability and correlations}

\author{F. Barrows}
 \affiliation{Theoretical Division (T4), Los Alamos National Laboratory, Los Alamos, New Mexico 87545, USA}
  \affiliation{Center for Nonlinear Studies, Los Alamos National Laboratory, Los Alamos, New Mexico 87545, USA}
\author{F. Sheldon}
 \affiliation{London Institute for Mathematical Sciences, Royal Institution, 21 Albemarle St, London W1S 4BS, UK}
\author{F. Caravelli}
 \affiliation{Theoretical Division (T4), Los Alamos National Laboratory, Los Alamos, New Mexico 87545, USA}


\begin{abstract}

Networks with memristive devices are a potential basis for the next generation of computing devices.
They are also an important model system for basic science, from modeling nanoscale conductivity to providing insight into the information-processing of neurons.
The resistance in a memristive device depends on the history of the applied bias and thus displays a type of memory.
The interplay of this memory with the dynamic properties of the network can give rise to new behavior, offering many fascinating theoretical challenges.
But methods to analyze general memristive circuits are not well described in the literature.
In this paper we develop a general circuit analysis for networks that combine memristive devices alongside resistors, capacitors and inductors and under various types of control.
We derive equations of motion for the memory parameters of these circuits and describe the conditions for which a network should display properties characteristic of a resonator system.
For the case of a purely memresistive network, we derive Lyapunov functions, which can be used to study the stability of the network dynamics.
Surprisingly, analysis of the Lyapunov functions show that these circuits do not always have a stable equilibrium in the case of nonlinear resistance and window functions.
The Lyapunov function allows us to study circuit invariances, wherein different circuits give rise to similar equations of motion, which manifest through a gauge freedom and node permutations.
Finally, we  identify the relation between the graph Laplacian and the operators governing the dynamics of memristor networks operators, and we use these tools to study the correlations between distant memristive devices through the effective resistance.
\end{abstract}

\maketitle

\tableofcontents

\section{Introduction}

Once regarded as somewhat niche, memristance is now recognized as a ubiquitous feature of nature, especially at the nanoscale~\cite{Pershin2011}.
As a result, memristive devices have been the subject of intense research over the past decade. In addition to the wide applicability of these components, ranging from scalable memory devices to machine learning and reservoir computing\cite{DCRAM,Carbajal2015,reservoirmem,milano001,zhurc,Xu_FNano_2021}, there is also interest in their fundamental dynamical properties\cite{Pershin2013,Diaz-Alvarez2019,Zhu2021information,caravelli2017complex,Riaza_NonLinAnalysis_2011}. For example, it has long been known that the introduction of memristive devices in a circuit can lead to chaotic dynamics and there are numerous papers studying the properties of the Chua attractors\cite{Matsumoto_IEEE_1984,Maden_ChuaCircuit_1997}.

However, less is known for generic circuits with memory, and even fewer exact results have been obtained in this regard. While little theoretical progress has been made for the most general cases of electronic circuits with memory, purely memristive circuits are amenable to an analytical approach: the equations for networks of these devices can often be obtained analytically, unveiling the most bizarre and unexpected properties of these dynamical circuits\cite{Zegarac_2019,TelliniMacucci_CircuitTheory_2021}. For instance, it has been shown that the asymptotic states of certain memristive-resistive circuits can be well described by the Curie-Wei\ss\ model and its ferromagnetic-paramagnetic transition. Further, it has been argued that random circuits of memristive devices (analyzed through the lens of their Lyapunov function) exhibit a ferromagnetic-glass transition\cite{caravellisheldon,Caravelli4}.

Obtaining a full dynamic picture of this behavior is difficult because the nonlinearity of the memristive devices makes the equations of motion impossible to solve analytically. This paper provides tools to understand the properties of generic, first-order memristive devices with window functions in networks. In particular, we derive dynamical equations for generic circuits with memristive devices, and we derive Lyapunov functions for the infinite family of first-order memristive circuits with sharp boundaries (e.g., discontinuous window functions).

We also generalize techniques used for network circuit analysis to the case of memristive device networks. Using these, we demonstrate that these circuits frequently give rise to analytically nontrivial or intractable dynamical equations. In doing so, we demonstrate mathematical techniques, including correlation analysis and Lyapunov functions, which can provide information on the dynamics, evolution, and stability of a network of memristive devices. Further, we discuss how different circuits give rise to similar dynamics through gauge freedom and node permutation in the governing equations.

This work builds upon previous works to study the dynamics of memristor networks most of which use the graph Laplacian or projector operators\cite{caravelli2017complex,Stern_arxiv_2024,Xiang_Chaos_2022,Dhivarkaran_NeurProcLett_2024}. We identify the relation between the graph Laplacian and the projection operators. Finally we use this correspondence to study the effective resistance in memristor networks and construct effective circuits which allow us to study the correlations between distant memristors.

\section{Graph theory for circuit dynamics}

In network analysis of electrical circuits, an electrical circuit is mapped to a graph $\mathcal{G}=(V, E)$. A graph $\mathcal{G}$ consists of a set of $n$ vertices/nodes $V$ and a set of $m$ edges $E$. An undirected edge may be denoted $\{k, l\}$, but in circuit analysis, each edge is assigned an arbitrary direction from $k$ to $l$ denoted $(k,l)$. The graph allows no self-edges and no duplicate edges. As a simple example, let us work with the triangle graph defined by,
\begin{align}
V = \{1, 2, 3\}, \quad E = \{(1,2), (2,3), (1,3)\}
\end{align}
A representation of this graph is shown in Figure \ref{fig:SimpleTriangle}.
\begin{figure}[ht]
\includegraphics[width=.2\textwidth]{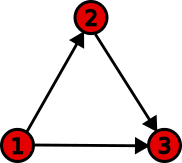}
\caption{ A simple graph with three edges connecting three nodes without an oriented cycle.}
\label{fig:SimpleTriangle}
\end{figure}
Note that the edge between 1 and 3 is not cyclical. A graph representation of a circuit generally consists of edges corresponding to circuit elements and nodes, i.e., junctions in the circuit. Currents and voltages are assigned in the graph to satisfy Kirchhoff's current law (KCL) and Kirchhoff's voltage law (KVL). A circuit is solved when it satisfies these constraints given a circuit topology; thus, network analysis of a circuit consists of solving for valid current and voltage configurations.  

A current configuration $i = \lbrace i_e, \; e\in E\rbrace$ is a set of numbers associated with the edges which satisfy KCL. KCL could be enforced on each vertex by building an $n\times m$ incidence matrix $B$, where each row represents a node and each column an edge; the matrix entries take values of $\pm1$ corresponding to a directed edge oriented towards or away from each node, and $0$ when an edge is not incident on a node.
For our triangle graph, this is,
\begin{equation}
B = \bordermatrix{
  & e_{12} & e_{23} & e_{13} \cr
  n_1 &1  &  0 &  1 \cr
n_2 &-1 &  1 &  0 \cr
n_3 &0  & -1 & -1 \cr
}
\end{equation}
$e_{ij}$ are edges linking nodes $i$ and $j$ (edges are indexed with superscripts, $e^k$), and $n_i$ are indexed nodes. For each vertex $k$ and edge $e$, let $b_{k,e} = 1$ if $k$ is the start, $-1$ if $k$ is the finish, and $0$ if the edge does not include $k$. KCL are enforced such that a current vector $\vec{i}$ satisfies,
\begin{eqnarray}
    \sum_e b_{k,e}i_e = 0 .
\end{eqnarray}
If $\vec{i}$ is an allowed current configuration then
\begin{eqnarray}
    B \vec{i} = 0.
\end{eqnarray}
The operator $B$ may be interpreted as a type of divergence on the graph, and the relationship $B\vec{i} = 0$ is equivalent to saying that the current configurations must be divergence-less. In this sense, they are similar to magnetic fields.

Similarly, a voltage configuration $v =\lbrace v_e, \; e\in E\rbrace$ is a set of numbers associated with the edges that satisfy KVL. That is, we define a \emph{cycle}, $l$ on the graph as a sequence of nodes $(k_1, \dots k_j)$ such that $\{k_i, k_{i+1} \} \in E$ (in either direction) and $\{k_j, k_1\} \in E$.  We can find the oriented edge set of the cycle $E_l$, wherein the orientation matches the direction traversed in $l$ (these edges do not necessarily lie in $E$). We construct a cycle matrix $A$; each row represents a unique cycle in the graph, and each column an edge in $E$. The matrix entries take values of $\pm 1$ when the edge is part of the cycle and $0$ otherwise; the sign indicates whether the orientation of the edge in $E$ aligns with the direction of the cycle $(+1)$ or is opposite to it $(-1)$. The triangle graph has two possible cycles, $1\to 2 \to 3\to 1$ or $1 \to 3 \to 2 \to 1$.  We can include both for now,
\begin{equation}
A =  \bordermatrix{
& e_{12}& e_{23} & e_{13} \cr
\text{cycle}_1&1 & 1 & -1 \cr
\text{cycle}_2&-1 & -1 & 1 }
\end{equation}
These are not independent, and we must eliminate one cycle to solve for a voltage configuration, generating a reduced cycle matrix. KVL are enforced on a voltage vector $\vec{v}$ for each cycle in the graph; let $a_{l, e}$ be $1$ if the orientation matches that in the graph, $-1$ if it is opposite, and 0 if $e\notin E_l$. 
In this case, we can write,
\begin{eqnarray}
\sum_e a_{l, e} v_e = 0.
\end{eqnarray}

To solve for a voltage configuration we will have to do some extra work later to determine which cycles we include. If $\vec{v}$ is a voltage configuration then
\begin{eqnarray}
A \vec{v} = 0.
\end{eqnarray}
$A$ may be interpreted as a type of curl on the graph that gives the circulation of a vector. In this sense, the relation $A\vec{v}=0$ is analogous to saying that the voltage drops are curl-less and thus analogous to electric fields.

The current and voltage configurations belong to a set of all current and voltage configurations. In the supplementary material, we prove that these current and voltage configurations are dual. Thus, we have two valid representations of the electrical properties of the circuit. Here, we define two ways of approaching the solution of a circuit: the node method and the loop method.

In the node method, the voltage configuration is calculated by finding voltage potentials of the nodes in the circuit. One potential value is set as the ground, and the $n-1$ remaining potential values are considered as the unknowns. For the $n-1$ non-root nodes, we write down KCL and solve for the voltage potential in succession via Gaussian elimination.

In the loop method, we must have a way of choosing cycles. To do this, we introduce a spanning tree $T$. A spanning tree is the minimum number of edges to connect all nodes. The remaining cycle edges can be added to the spanning tree; when an individual cycle edge is added to the spanning tree it gives a unique cycle whose orientation we assign by that of the included edge. This is called the \emph{fundamental cycle} of the edge, also called a fundamental loop. From the reduced cycle matrix, we write down KVL. A current $\vec{i}$ may then be expressed by the \emph{fundamental loop currents} $j_e$ as a weighted sum of the fundamental cycles. Writing a \emph{reduced cycle matrix} $\tilde{A}$ in terms of the fundamental loops, we have
\begin{align}
\vec{i} = \tilde{A}^t \vec{j} .\end{align}

Given a spanning tree $T$, we can find unique paths from any arbitrary initial node, the root, and any other node $k$. We can sum the voltages at the nodes along this path; if an edge is oriented along the walk from root to node, the voltage value of that edge is added; if the walk and edge are unaligned, the voltage is subtracted. We define this walk $p_k$, and the voltage configuration can be written in terms of this fundamental walk:
\begin{align}
\vec{v} = B^t \vec{p},\end{align}
which reproduces the relation between the voltage and potential above.
These identities will be useful to derive the results below.

\subsection{Resistors and memristive devices}
A memristive device can be treated as a resistor with variable resistance; its resistance can take continuous values between a low and high resistance. The state of the resistance between two limiting values can be parameterized by a variable $x$, which is constrained by some window function such that $ 0\leq x\leq 1$. This can be thought of as the first-order term in a polynomial expansion for the resistance in terms of an adimensional parameter. Thus, the resistance varies linearly with $x$; later, we will generalize this to include higher-order terms of $x$ to describe nonlinear memresistance. In typical conditions, the resistance of a memristive device is bounded between two fixed high and low resistance states, corresponding to the 'off' and 'on' states. The resistance of these states is given by the values $\Roff\gg \Ron>0$.   
 As the resistance state depends on the history of applied voltage or bias, there is a memory stored in the resistance value, and we will refer to $x$ as the \emph{internal memory parameter}. 
 For the case of widely used metal oxide memristive devices, a simple toy model for the evolution of the resistance is the following\cite{caravelli2017complex}:
 \begin{eqnarray}
    R(x)&=&\Ron x +\Roff (1-x), \text{ the direct parametrization,}  \\
     \frac{d}{d t} x(t)&=&\frac{\Ron}{\beta} i(t)-\alpha x(t),
     \label{eq:memr1}
\end{eqnarray}
In the direct parametrization, $R(1)=\Ron$ and $R(0)=\Roff$. In equation \eqref{eq:memr1}, $i(t)$ is the current flowing in the device at time $t$, $\alpha$ is a dampening parameter with units of inverse time, and $\beta$ can be thought of as the inverse learning rate with units of time per voltage. These constants control the decay and reinforcement timescales. We define a scaling variable $\xi=\frac{\Roff-\Ron}{\Ron}$. $\alpha$, $\beta$, $\xi$ can be measured experimentally. Alternatively, a similar formulation of the resistance dynamical equation is given, 
\begin{eqnarray}
    R(x)&=&\Ron(1-x)+\Roff x, \text{ the flipped parametrization},  \\
     \frac{d}{d t} x(t)&=&\alpha x(t)-\frac{\Ron}{\beta} i(t).
     \label{eq:memr2}
\end{eqnarray}
In the flipped parametrization $R(1)=\Roff$ and $R(0)=\Ron$. Here we define the scaling factor $\chi=\frac{\Roff-\Ron}{\Roff}$ which is bounded $\chi <1$. 
These two parameterizations are inequivalent models of the memristive device dynamics. 
From the point of view of the dynamics of the single memristive device, the two parametrizations are related by the following change of variables: $\alpha\leftrightarrow -\alpha$, $\beta \leftrightarrow -\beta$ and $\xi \leftrightarrow -\chi$,  i.e., $\xi\left(\frac{-R_{\text{off}}}{R_{\text{on}}}\right)=\chi$.

\subsection{Resistance characterization}

The model given above, which has been previously studied \cite{caravelli2017complex,Caravelli2019Ent}, has a simple time dependency.
\begin{eqnarray}
    x(t)=e^{\pm \alpha (t-t_0)}x_0 \mp\frac{1}{\beta^\prime}\int^t e^{\mp\alpha (s-t)} i(s) ds
\end{eqnarray}
This is valid to a first approximation, with $(+\frac{1}{\beta^\prime})$ in the direct parameterization and $(-\frac{1}{\beta^\prime})$ for the flipped parameterization. This is the simplest description of a device that saturates between two limiting values, $\Ron$ and $\Roff$, and it can be generalized. Here, we focus on how to parametrize higher nonlinearities. This approach can also be extended to systems in which nonlinear Schottky barriers are present by modifying the effective equation for the resistance with an exponential term in front and by parametrizing the resistance with a voltage-dependent term of the form $R^\prime_n(x)=e^{\alpha V}R_n(x)$.

Consider a memristive system described by a single internal variable $x\in [0,1]$, with the property that $R(x)\in [\Ron,\Roff]$. Then, if we consider a set of resistive variables $R_{0}\leq \cdots \leq R_{k} \leq \cdots \leq R_n  $, we can write without loss of generality
\begin{equation}
    R_n(x)=\sum_{k=0}^n R_k B_{k,n}(x)
\end{equation}
where $B_{k,n}(x)$ are Bernstein polynomials with $B_{k,n}(x)=\binom{n}{k} x^k (1-x)^{n-k}$. In the limit $n\rightarrow \infty$, $R_n(x)\rightarrow  R(x)$ pointwise.
Now assume $\alpha=0$, we have that in a controlled experiment with sinusoidal inputs, and the assumption of symmetric memristive devices, we have
\begin{equation}
    \tilde R(t)=\frac{V(t)}{I(t)}= R\left(x(t)\right)=\sum_{k=0}^n R_k B_{k,n}\left(x(t)\right)
\end{equation}
and we can write
\begin{widetext}
\begin{eqnarray}
\left( \begin{array}{c}
\tilde R(t_0) \\
\tilde R(t_1) \\
\vdots \\
\tilde R(t_n)
\end{array} \right)=\left( 
\begin{array}{cccc}
B_{0,n}\left(x(t_0)\right) & B_{1,n}\left(x(t_0)\right)  & \cdots & B_{n,n}\left(x(t_0)\right)\\
B_{0,n}\left(x(t_1)\right) & B_{1,n}\left(x(t_1)\right) & \cdots & B_{n,n}\left(x(t_1)\right)\\
\vdots & \vdots & \vdots & \vdots \\
B_{0,n}\left(x(t_n)\right) & B_{1,n}\left(x(t_n)\right) & \cdots & B_{n,n}\left(x(t_n)\right)\\
\end{array}
\right)
\left( \begin{array}{c}
R_0 \\
R_1 \\
\vdots \\
R_n
\end{array} \right)
\label{eq:ideal_memristive device_dynamics}
\end{eqnarray}
\end{widetext}
from which we can infer the Bernstein parameters from the acquired data,
\begin{equation}
    \vec R=B^{-1} \tilde  R.
\end{equation}

Note that because the Bernstein polynomials are partitions of unity, one must have that if $\tilde R=1 \bar R$, then necessarily $\vec R= \bar R \vec 1$ for some vector of resistances $\bar{R}$. 

The Bernstein polynomials approximation is valid when $\tilde{R}$ is a smooth function, such that $R$ is smooth in terms of $x$. Thus, it is possible to learn a nonlinear function $R(x(t))$ given a $\vec{R}$ and real-valued resistance data $\tilde{R}(t)$. In the case of nonlinear resistance, the resistance in time-correlated real-valued samples can be expanded in terms of $\{X^{(m)}(t)\}_{m=1\ldots M}$. As the samples will be noisy in real data, and not noiseless like in the simulated ideal memristive device case, we may find deviations even from the functional form of equation \eqref{eq:ideal_memristive device_dynamics}, as shown schematically in Figure \ref{fig:bernstein}. In this case, the task will be to learn the correct function $f(X)$, where $X=\text{diag}(x)$ is an $m \times m$ diagonal matrix of memory parameters.
\begin{figure}[ht]
    \centering
    \includegraphics[scale=0.6]{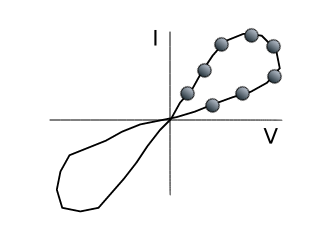}
    \caption{A schematic of an IV-hysteresis loop generated by a memristive device. The deviations from a smooth loop are due to experimental noise that is not modeled by the Bernstein polynomial approximation of the resistance. The correct function $f(X)$, without these deviations, could be determined from experimentally collected data.  }
    \label{fig:bernstein}
\end{figure}

It is interesting to note that a device with this property will still satisfy:
\begin{eqnarray}
    R(x)&=&\sum_{k=0}^n R_k B_{k,n}\left(x(t)\right)  \\
    \frac{d}{dt} x(t)&=& \alpha x(t)-\frac{1}{\beta} \frac{V(t)}{R\left( x(t)\right)},
\end{eqnarray}
with $R_0=\Ron$ and $R_n=\Roff$, and will reduce to a trivial memristive device for $n=1$. In the generic case we have a nontrivial time dependency,
\begin{eqnarray}
    x(t)=e^{\pm \alpha (t-t_0)}x_0 \mp\frac{1}{\beta}\int^t e^{\mp\alpha (s-t)} \frac{V(s)}{B(x(s))R} ds .
\end{eqnarray}
A pure memristor has $x(t)\propto q(t)$ or $x(t)\propto \phi(t)$, the charge and flux, respectively.

\subsection{Different control schemes}
In this section, we derive various control schemes for the resistance in a memristive device network. The various ways in which the memristive circuit can be controlled can be cast in terms of a generic vector $\vec{Y}$
\begin{equation}
\frac{d}{dt}\vec{x}(t)=\alpha\vec{x}(t)-\frac{1}{\beta} \left(I+\xi \Omega_A X(t)\right)^{-1} \vec{Y},
\label{eq:diffeqgen}
\end{equation}
where, as we show below, we have
\begin{eqnarray}
    \vec Y=\begin{cases}
    \Omega_A \vec S & \text{Voltage sources in series}\\
    A(A^t A)^{-1} \vec S_{ext}  & \text{Voltage sources at nodes}\\
    \Omega_B \vec j & \text{Current sources in parallel}  \\
    B^t(B B^t)^{-1} \vec j_{ext} & \text{Current sources at nodes}
    \end{cases} \; \; ,
    \label{eqn:controlschemes}
\end{eqnarray}

which is useful in various proofs concerning the control of reservoirs using memristive devices\cite{sheldonrc,BaccettiCaravelli_Arxiv_2023}. Here $\Omega_A$ and $\Omega_B$ are orthogonal projection operators, $\Omega_A=A^t(AA^t)^{-1}A$ and $\Omega_B=B^t (B B^t) ^{-1} B$. $\Omega_A$ and $\Omega_B$ are discussed further below. Further, in the supplementary material, we show that it is possible to write similar differential equations for the resistance without referring to $x$.

\subsubsection{Voltages at Nodes and in Series }

Here, we consider a network of resistors characterized by an incidence matrix $B$, cycle matrix $A$, and resistance values $r \in \mathbb{R}^m$. We also consider a set of voltage sources $s \in \mathbb{R}^m$ attached to edges in a consistent way (not violating KVL). For each edge $e\in E$, we then have
\begin{align}
v_e = r_e i_e + s_e, \quad \vec{v} = R\vec{i}+\vec{s}
\end{align}
where $R=\mathrm{diag}(r)$ is a diagonal matrix of resistances. Our goal is to solve for the current configuration $\vec{i}$.

Define $\tilde{B}$ as the \emph{reduced incidence matrix} by omitting the last row of $B$ which renders it nonsingular. We then introduce a spanning tree $T$ and reorder the edges into those belonging to the tree and those belonging to the chords,
\begin{align}
\tilde{B} = (B_T, B_C), \quad \tilde{A} = (A_T, A_C)
\end{align}
With this construction, $A_C$ is the identity. We thus have
\begin{eqnarray}
\tilde{B} \tilde{A}^t = B_T A^t_T + B_C = 0 \to A^t_T = -B_T^{-1} B_C
\end{eqnarray}
as $B_T$ is invertible by our construction. Thus,
\begin{eqnarray}
B\vec{i} = 0 = B_Ti_T + B_C i_C \to i_T = A^t_T i_C \to \vec{i} = \tilde{A}^t i_C
\label{eqn:Bi-Atic}
\end{eqnarray}
which is just the expression in terms of the fundamental loop currents we gave previously. Then, imposing KVL,
\begin{eqnarray}
\tilde{A}\vec{v} = 0 = \tilde{A} R \vec{i} + \tilde{A}\vec{s} = \tilde{A} R \tilde{A}^t i_C + \tilde{A} s .
\end{eqnarray}
This gives the full solution,
\begin{align}
\vec{i} = -\tilde{A}^t (\tilde{A} R \tilde{A}^t)^{-1}\tilde{A} \vec{s}.
\label{eq:isol}
\end{align}

Since we have seen that for voltages in series, we have $\vec s=A(A^t A)^{-1} \vec p$, it follows that we have for the voltages at nodes that $\vec{Y}=A(A^t A)^{-1} \vec S_{ext}$.

Let us now derive the memristive device network equation for voltages in series, which was previously derived\cite{caravelli2017complex}. Here, we provide a faster derivation.
We use the following form of the Woodbury matrix identity 
\begin{eqnarray}
    (P+Q)^{-1}=\sum_{k=0}^\infty (- P^{-1} Q)^k P^{-1}
    \label{eqn:expansion}
\end{eqnarray}
where assume that the matrix $P$ is invertible.

Then, let us define $P=\tilde A \tilde A^t$ and $Q=\tilde A Z \tilde A^t$. We have

\begin{eqnarray}
    \tilde A^t(\tilde A \tilde A^t+\tilde A Z \tilde A^t)^{-1}\tilde A&=&\sum_{k=0}^\infty (-1)^k \tilde A^t(( \tilde A \tilde A^t)^{-1} \tilde A Z \tilde A^t)^k (\tilde A \tilde A^t)^{-1}\tilde A\nonumber \\
    &=&\sum_{k=0}^\infty (-1)^k (\Omega_{\tilde A} Z )^k \Omega_{\tilde A}\nonumber \\
    &=&\Omega_{\tilde A} (I+Z)^{-1} \Omega_{\tilde A}=(I+\Omega_{\tilde A} Z)^{-1}\Omega_{\tilde  A} = \Omega_{\tilde{A}}\left(I+Z\Omega_{\tilde{A}}\right)^{-1}
\label{eqn:InvDerivation}
\end{eqnarray}

Here $\Omega_{\tilde{A}}$ is the non-orthogonal projection operator $\Omega_{\tilde{A}}= \tilde{A}^t\left( \tilde{A} \tilde{A}^t \right)^{-1}\tilde{A}$. In the direct parameterization we can assume $R=R_{\text{off}}-(R_{\text{off}}-R_{\text{on}}) G(X)=R_{\text{off}}(I-\chi G(X))$, where we have used from the flipped parameterization $\chi=\frac{R_{\text{off}}-R_{\text{on}}}{R_{\text{off}}}$ (which satisfies $0<\chi<1$) with $G(x)$ some function of $x$ such that $G(0)=0$ and $G(1)=1$. It follows that we can write 
\begin{subequations}
\begin{align}
\vec i &= -R_{\text{off}}^{-1}(I-\Omega_{\tilde A}(R_{\text{off}}-R_{\text{on}})G(X))^{-1} \Omega_{\tilde A} \vec s
\label{eq:isol2-1}\\
&= -R_{\text{off}}^{-1}(I-\chi\Omega_{\tilde A} G(X))^{-1} \Omega_{\tilde A} \vec s
\label{eq:isol2}
\end{align}
\end{subequations}

Then, let us use the equation for each memristive device,
\begin{eqnarray}
    \frac{dx_k}{dt}=\frac{R_{\text{off}}}{\beta^\prime} i_k-\alpha x_k ,
\end{eqnarray}
then
\begin{eqnarray}
    \frac{d\vec x}{dt}=-\frac{1}{\beta^\prime}(I-\chi\Omega_{\tilde A} G(X))^{-1} \Omega_{\tilde A} \vec s-\alpha \vec x ,
\end{eqnarray}
and if we define the voltage generators to be with the negative on the side of the memristive device, then we indicate the voltage generators with $\vec{S}$, and we have  
\begin{eqnarray}
    \frac{d\vec x}{dt}=\frac{1}{\beta^\prime}(I-\chi\Omega_{\tilde A} G(X))^{-1} \Omega_{\tilde A} \vec S-\alpha \vec x .
\end{eqnarray}

If instead in the flipped parameterization we assume $R=R_{\text{on}}+(R_{\text{off}}-R_{\text{on}}) G(X)=\Ron(I+\xi G(X))$, again with $G(0)=0$ and $G(1)=1$ and
\begin{eqnarray}
    \frac{dx_k}{dt}=-\frac{R_{\text{on}}}{\beta} i_k+\alpha x_k ,
\end{eqnarray}
we have
\begin{eqnarray}
    \frac{d\vec x}{dt}=-\frac{1}{\beta}(I+\xi\Omega_{\tilde A} G(X))^{-1} \Omega_{\tilde A} \vec S+\alpha \vec x ,
\end{eqnarray}
where $\xi=\frac{R_{\text{off}}-R_{\text{on}}}{R_{\text{on}}}>0$. Note that we have defined the activation voltages as $\frac{\beta^\prime}{R_{\text{off}}}=\frac{\beta}{R_{\text{on}}}$.

\subsubsection{Currents in Parallel and at nodes}

In analogy to the previous section, we can solve for the voltage configuration $v$, given a network of resistors and a set of current sources on edges $j \in \mathbb{R}^m$,  
where the current is added in parallel to the corresponding memristive device such that the total current through the two links is ${i} = G{v}+ {j}$, with $G$ the conductance matrix. In this case, we have
\begin{align}
B{i} = 0 = BG{v} + B{j} = BGB^t p + B{j} \to {p} = -(BGB^t)^{-1} B{j}
\end{align}
The matrix $BGB^t$ is $n-1\times n-1$ and $B$ has rank $n-1$.  We can thus invert it to get,
\begin{align}
\label{eq:vsol}
v = -B^t(BGB^t)^{-1}Bj .
\end{align}

If we drive a current $i^{ext}\in \mathbb{R}^n$ at the $n$ nodes, current conservation demands that $\sum_j i^{ext}_j = 0$.  We eliminate one node to obtain the $n-1$ dimensional $\vec{j}_\text{ext}$. This must satisfy,
\begin{align}
Bi = \vec{j}_\text{ext} = BGv = BGB^t p
\end{align}
giving
\begin{eqnarray}
    v = B^t p = B^t(BGB^t)^{-1} \vec{j}_{ext} .
\label{eqn:CurrentConservation_Injection}
\end{eqnarray}

Note the difference in sign. This is because we have usually defined currents \emph{exiting} nodes as being positive, and here, we have considered $j_{ext}$ as positive when entering the nodes.

Let us thus consider a diagonal matrix of homogeneous resistances:
\begin{eqnarray}
    R&=&\Ron \text{diag}\left(\frac{R(x_i(t))}{\Ron}\right) \nonumber \\
    &=&\Ron \left( I + \text{diag}\left(\frac{R(x_i(t))}{\Ron}-1\right) \right)
\end{eqnarray}
In terms of the Bernstein polynomials, for $n=1$ then $\frac{R(x_i(t))}{\Ron}-1$ simply reduces to $\xi X$, where $\xi=\frac{\Roff-\Ron}{\Ron}$ as previously derived (we could similarly work in the direct parametrization). We note that $\xi X$ interpolates linearly between $0$ and $\xi$.
For $n>1$, we define this function as $\xi f_n(x_i)$, which interpolates (nonlinearly in $x_i$) between $0$ and $\xi$ by the properties of the Bernstein polynomials. Thus, $f_n(x_i)$ interpolates between $0$ and $1$ and is a generalization of the $x_i$ for $n=1$. Thus, throughout the manuscript, it will be simply necessary to replace $\xi X$ with $\xi f_n(X)$ everywhere to obtain the generalized device equations, where $f_n(X)$ is a positive and bounded from above (by 1) matrix for arbitrary $n$.

\section{Memristive circuit motifs}

Here, we generalize dynamical equations for standard passive circuit elements to incorporate memristive devices and discuss the eigenvalues of such circuit motifs. Figure \ref{fig:motifs} depicts resistive, RC, and RLC devices. 

\begin{figure}[ht]
    \centering
    \includegraphics[width=.35\textwidth]{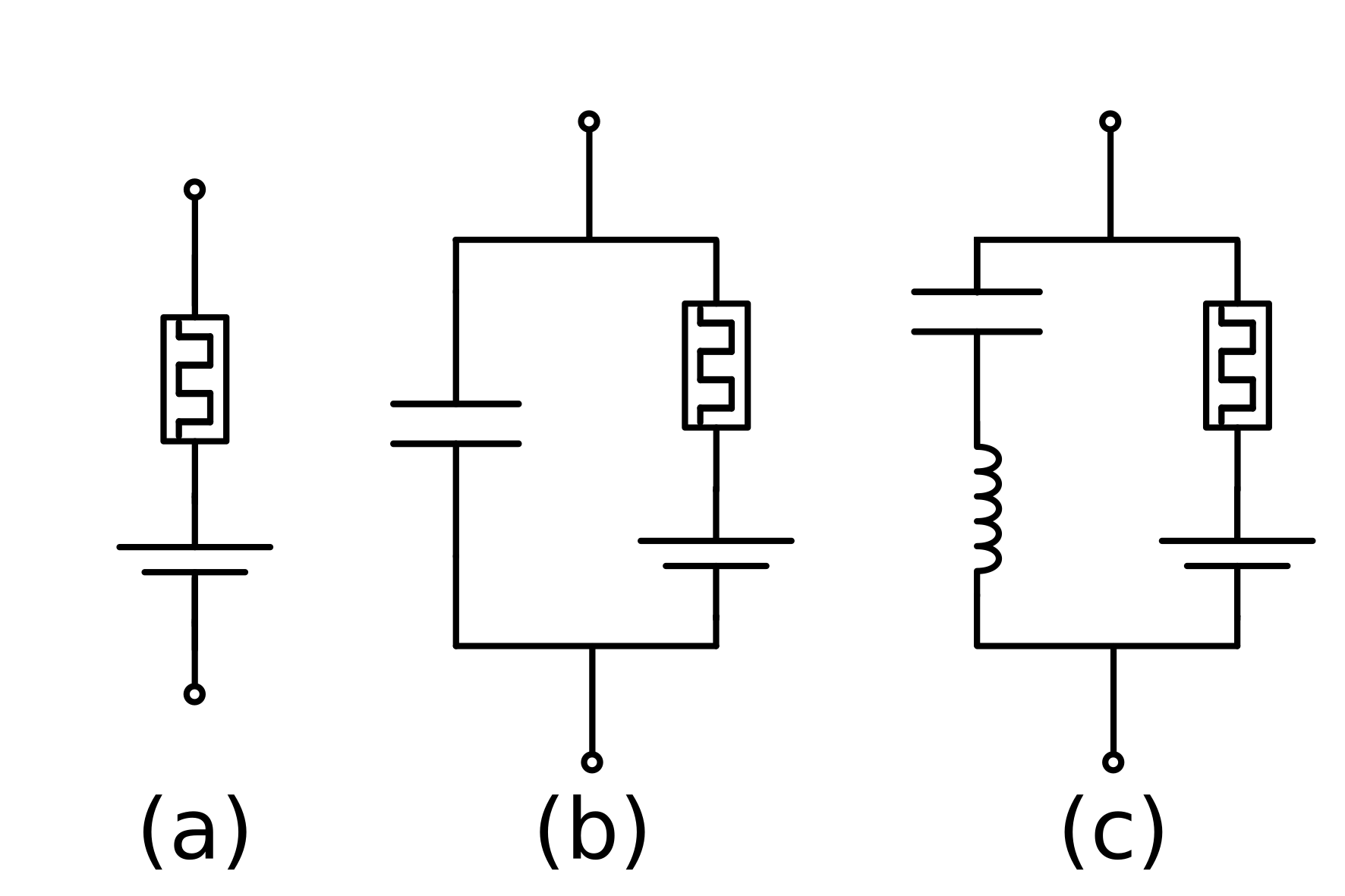}
    \caption{Schematic representations of circuit motifs. (a) Memristive device and voltage source motif. (b) The RC motif with a memristive device and voltage source in parallel with a capacitor. (c) The RLC motif with a memristive device and voltage source in parallel with an inductor and capacitor. }
    \label{fig:motifs}
\end{figure}

We examine these circuit elements as they can form fundamental motifs of more complicated circuits, going beyond purely memristive circuits. We derive equations for the dynamics of RC and RLC motifs that incorporate memristive devices. Further, we derive dynamical equations for RLC memristive device networks and describe some simple circuit structures, e.g., all circuit elements in series, which make analyzing the circuit dynamics more tractable.

Let us now try to understand the eigenvalue properties of a resistive circuit and how they depend on the device.
As shown in equation \eqref{eq:isol}, we know that the equation for the currents is of the form\cite{caravelli2017complex}
\begin{equation}
    \vec i=-A^t(A R A^t)^{-1} A \vec S_{\text{in}},
\label{eqn:iOmegaS}
\end{equation}
where $\vec{S}_\text{in}$ is the voltage applied on each edge. From equation \eqref{eq:isol2} (with $G(x)$ as function of $x$), we can rewrite this as
\begin{eqnarray}
    \vec{i}= -R_{\text{off}}^{-1}(I-\chi\Omega_{\tilde A} G(X))^{-1} \Omega_{\tilde A} \vec s .
    \label{eqn:iExpandedOmageS}
\end{eqnarray}
From equation \eqref{eqn:iOmegaS}, we can obtain the voltage across each edge
\begin{equation}
    S_\text{out}= R\vec i=-R A^t(A R A^t)^{-1} A \vec S_\text{in},
    \label{eqn:Sout=Ri}
\end{equation}
which can be written as
\begin{equation}
    S_{out}=-\Omega_R S_{in}
\end{equation}
where $\Omega_R$ is the non-orthogonal projector $\Omega_R =R A^t(A R A^t)^{-1} A $.
Thus, a resistive circuit acts simply as a filter for the voltages and is linear. We note $\Omega_R^2=\Omega_R$; thus, the eigenvalues are simply zeros and ones.

Following equation \eqref{eqn:Sout=Ri}, the memristor and source motif can be rewritten in the direct parametrization as:
\begin{eqnarray}
    S_\text{out} &=& -\left(I-\chi X(t)\right)A^t\left(A\left(I-\chi X(t)\right)A^t\right)^{-1} A S_{\text{in}}
    \nonumber\\
    &=& -\left(I-\chi X(t)\right)\left(I-\chi\Omega_A X(t)\right)^{-1}\Omega_A S_\text{in}
\end{eqnarray}
From this we can note a stability condition of $S_\text{out}$ with respect to $S_{\text{in}}$,
\begin{eqnarray}
\left(I-\chi X(t)\right)^{-1}S_\text{out}=-\left(I-\chi \Omega_A X(t)\right)^{-1}\Omega_A S_\text{in}.
\end{eqnarray}
Thus $S_\text{out}=-S_\text{in}$ when the voltage across each edge is equal to the voltage applied at each edge,
when $\left(I-\chi X(t)\right)^{-1} =\left(I-\chi \Omega_A X(t)\right)^{-1}\Omega_A $. This will occur when $\Omega_A=I$, the identity matrix, corresponding to when the cycles space $(AA^t)$ is invertible and each cycle is linearly independent.
\subsection{RC Networks}

Let us now show how this framework can be used to study standard RC circuits. Changing our network design to edges consisting of a capacitor in parallel with a resistor and voltage generator, shown in Figure \ref{fig:motifs} (b), we have at each edge,
\begin{eqnarray}
    i_e &=& i_{r,e} + i_{c,e}\\
    v_{r,e} &=& -r_e i_{r,e} + s_e = v_{c,e} = -\frac{q}{c_e} = v_e\\
    i_e &=& \frac{s_e - v_e}{r_e} - c_e\frac{dv_e}{dt},
\label{eqn:RC-current}
\end{eqnarray}
here subscripts $r$ and $c$ correspond to the resistor and capacitor edges, respectively. Imposing Kirchhoff's current law and using $\vec{v} = B^t \vec{p}$ for a potential vector $\vec{p}$
\begin{eqnarray}
    B\vec{i} &=& 0 = BR^{-1}(\vec{v} - \vec{s}) + BC\frac{d\vec{v}}{dt}\\
     BCB^t \frac{d\vec{p}}{dt} &=& -BR^{-1}(\vec{v} - \vec{s})\\
    \frac{d\vec{v}}{dt} &=& -B^t (BCB^t)^{-1}B R^{-1} (\vec{v} - \vec{s}) \nonumber \\
     &=& -\Omega_{B/C} (RC)^{-1}\vec v+\Omega_{B/C} (RC)^{-1}\vec s
    \label{eq:voltage}
\end{eqnarray}
where $\Omega_{B/C} = B^t(BCB^t)^{-1} BC$ is a non-orthogonal projector and $R,\, C$ are diagonal matrices containing the resistance and capacitance. Notably, $\Omega_{B/C}$ is asymmetric. Allowable voltage vectors satisfy $\vec{v} = B^t \vec{p}$ and are thus entirely within the row space of $B$.
The resulting dynamics are linear but do satisfy the fading memory property\cite{sheldonrc}. Considering the trajectories of two voltage configurations, subject to the same source $\vec{s}(t)$, for $\Delta \vec{v} = \vec{v}_1 - \vec{v}_2$ we have
\begin{align}
    \frac{d \Delta \vec{v}}{dt} = -\Omega_{B/C}(RC)^{-1} \Delta\vec{v}  .
\end{align}
As $\Omega_{B/C}$ has eigenvalue $1$ in the space of allowable voltages and $R_{ii}, C_{ii} > 0$, this gives exponential convergence of the trajectories.

The solution of equation \eqref{eq:voltage} is known. Let us call $\tau=rc$, we have
\begin{equation}
    \vec v(t)=e^{-\Omega_B \frac{t}{\tau}} \vec v(t_0)+\int_{t_0}^t  e^{\Omega_B \frac{(s-t_0)}{\tau}} \Omega_B (RC)^{-1}\vec{s}(s) ds .
\end{equation}
Note that the operator $e^{-\Omega_B \frac{t}{\tau}}$ can be written in a simpler form considering that $\Omega_B$ is a projector. In fact,
\begin{equation}
    e^{-\Omega_B \frac{t}{\tau}}=\sum_{k=0}^\infty (-1)^k\frac{t^k}{\tau^k k!} \Omega_B^k=I-(1-e^{-\frac{t}{\tau}})\Omega_B
\end{equation}
and thus
\begin{equation}
    e^{\Omega_B \frac{q}{\tau}}\Omega_B=e^{\frac{q}{\tau}} \Omega_B.
\end{equation}
Using the formula above, we can simply write the solution as
\begin{equation}
    \vec v(t)=\vec v(t_0)+\Omega_B\left((e^{-\frac{t}{\tau}}-1) \vec v (t_0)+\frac{1}{\tau}\int_{t_0}^t e^{\frac{s-t_0}{\tau}} \vec{s}(s) ds \right)
\end{equation}

Let's derive a memristive device version of an RC motif. We can rewrite the RC motif incorporating memristive device resistance. Plugging in the resistance from the direct parameterization, we have an equation of motion for the voltage:
\begin{eqnarray}
    \frac{d\vec{v}}{dt} &=& -\Omega_{B/C} (RC)^{-1}\vec v+\Omega_{B/C} (RC)^{-1}\vec s \nonumber\\
    &=& -\Roff^{-1}B^t(B C B^t)^{-1}B \left(I-\chi X(t) \right)^{-1} v(t)+ \Roff^{-1}B^t(BCB^t)^{-1}B\left(I-\chi X(t)\right)^{-1}\vec{s}(t) \nonumber
    \\
    &=& -\Omega_{B/C} C^{-1}\left(\vec{i}(t)+C\frac{d \vec{v}(t)}{dt} \right)\left(\vec{s}(t)-\vec{v}(t)\right)^{-1}\left( \vec{v}(t)-\vec{s}(t)\right)
\end{eqnarray}
Here, we used equation \eqref{eqn:RC-current} to substitute for the resistance. This can be rewritten,
\begin{eqnarray}
    \left(I-\Omega_{B/C} \right)\frac{d\vec{v}}{dt} &=&  \Omega_{B/C} C^{-1}\vec{i}(t) \nonumber
    \\
    &=& \Omega_{B/C} C^{-1}\beta^\prime\left(\alpha X(t)-\frac{d X}{dt}\right) .
\end{eqnarray}

Here, we have a coupled nonlinear differential equation relating voltage and the internal memory parameter. Solving for $\vec{v}(t)$ and $X(t)$ needs to be done in a self-consistent way, as $\dot{X}$ depends on both applied current and voltage sources, as well as the current and resistance in the RC motif, as determined by equation \eqref{eqn:RC-current}. Interestingly, the voltage across the capacitor will depend on the time integral of $X(t)$; thus, the voltage value has a memory of the memristive device resistance.

\subsection{RLC Networks}

The system above gives us a reservoir with real eigenvalues and some control over their distribution by choosing values of $r_e$ and $c_e$. However, the only stable reservoirs we are aware of that give extensive memories are the so-called 'resonator' systems previously studied\cite{white2004short}. In discrete time, this consists of eigenvalues spread around a circle in the complex plane with radius $\eta < 1$. In continuous time, the appropriate reservoir spectrum is given by the $z$-transform of these, giving eigenvalues with fixed real part $-\sigma$ and imaginary parts spread on the interval $[-i\pi, i\pi]$. An inductance is required to achieve imaginary eigenvalues in an electric circuit (the matrix $\Omega_{B/C}C^{-1}$ is symmetric and thus gives real eigenvalues). We begin with a sketch of the single edge case and use it as a guide for the network case.

We first consider a single edge isolated from the network, which we isolate and split into parallel edges to characterize the memristive device resistance. Our motif will consist of two edges in parallel, one with an inductor and capacitor and the other with a resistor and voltage generator. The voltages across the two edges are equal,
\begin{subequations}
\begin{align}
    v_e &= -\frac{q_e}{c_e} - l_e \frac{d i_{cl,e}}{dt} 
    \label{eqn:RLC_V-LC}
    \\
     &= -i_{r,e}r_e + s_e
     \label{eqn:RLC_Vmemristance}
     \end{align}
\end{subequations}
Here, $l_e$ and $c_e$ are the inductance and capacitance of the edge, shown in Figure \ref{fig:motifs} (c). The current through both edges is $i_e = i_{r,e} + i_{cl,e}$ and must vanish by current conservation when no current is injected across our motif. Writing these equations in terms of the voltage is inconvenient as the result will depend on $\dot{s}_e$, making comparisons with other types of reservoirs more difficult. Using the connection between the capacitive charge and current, $i_{cl,e} = \dot{q_e}$ we can write everything in terms of $q_e$,
\begin{align}
    0 = \frac{q_e}{c_e} + l_e \frac{d^2 q_e}{dt^2} + r_e \dot{q}_e + s_e.
\end{align}
This can be transformed into a first-order system,
\begin{eqnarray}
    \begin{pmatrix}
    \dot{q_e} \\
    \ddot{q}_e
    \end{pmatrix} =
    \begin{pmatrix}
    0 & 1 \\
    -\frac{1}{l_e c_e} & -\frac{r_e}{l_e}
    \end{pmatrix}
    \begin{pmatrix}
    q_e \\
    \dot{q}_e
    \end{pmatrix} +
    \begin{pmatrix}
    0 \\
    -\frac{s_e}{l_e}
    \end{pmatrix}
    \label{eqn:IsolatedRLCmotif}
\end{eqnarray}
This gives eigenvalues $\lambda_\pm = -\frac{r}{2l} \pm \sqrt{\frac{r^2}{4l^2} - \frac{1}{lc}}$.  We will obtain imaginary eigenvalues for the underdamped case, $\frac{r}{2}\sqrt{\frac{c}{l}} < 1$.

Moving to an RLC network, we can build a network consisting of RLC motifs composed as above, i.e., with memristive elements and a source parallel to an LC element; these motifs are connected by resistive edges referred to as network edges. We construct a minimum spanning tree such that the memristive edges exist on the tree and the LC components are on the cycle edges. The current in the resistive elements and the LC edges can be separated, $i_r$ and $i_{cl}$, respectively. In addition, there is a current in the edges that connect the RLC motifs; we call this the network current $i_N$, thus $\vec{i}=i_c+i_r+i_N$. Examining the cycles in the graph, due to KVL, $A\vec{v}=0$, and we will have cycles in each motif equating the voltage in the parallel meristive and LC edges, thus equations \eqref{eqn:RLC_V-LC} and \eqref{eqn:RLC_Vmemristance} remain valid in the network case. In addition, some cycles are a linear sum of memristive edges and the passive-resistive network edges. We treat the network resistance (outside the RLC motifs) as constants; the $R(X)$ matrix is now larger incorporating these constant resistive network edges that do not depend explicitly on an internal memory parameter, $x$.   
 Examining the cycles in the network, we obtain a model of the network circuit dynamics,  

\begin{eqnarray}
    A\vec{v} &=& 0 \nonumber
    \\
    &=& A R(x)i_r+A R(x) i_N -A v_c-A \vec{s} \nonumber
    \\
    &=& A R(x)A^t i_c - A R(x) i_c+A v_c-A \vec{s}
    \\
    -A^t i_c &=& -\Omega_{A/R(x)}R(x) i_c +\Omega_{A/R(x)}v_c - \Omega_{A/R(x)} \vec{s}
    \label{eqn:SimplifiedRLCcycles}
\end{eqnarray}
We have used $\vec{i}=A^t i_c$ from equation \eqref{eqn:Bi-Atic}. We have 
\begin{eqnarray}
    \Omega_{A/R(x)} v_c - \Omega_{A/R(x)}\vec{s} &=& -\Omega_{A/R(x)}R(x)i_r \nonumber
    \\
    &=& - A^t(A R(x)A^t) A R(x) i_r \nonumber
    \\
    &=& - A^t(A R(x)A^t)\left(-A R(x) i_N -A v_c+A \vec{s}\right) ,
\end{eqnarray}
from which we can rewrite equation \eqref{eqn:SimplifiedRLCcycles}
\begin{eqnarray}
    -A^t i_c &=& -\Omega_{A/R(x)}R(x) i_c +\Omega_{A/R(x)} R(x)i_N -\Omega_{A/R(x)} v_c- \Omega_{A/R(x)} \vec{s}
    \\
    0 &=& \left( \Omega_{A/R(x)}R(x)+I\right)\left(i_N+i_c\right) -\left( \Omega_{A/R(x)}R(x)-I\right)\left(-R(x)^{-1}v_c+R(x)^{-1} \vec{s}\right) \nonumber
    \\
    &=& \left( \Omega_{A/R(x)}R(x)+I\right)\left(i_N+\dot{\vec{q}}\right) -\left( \Omega_{A/R(x)}R(x)-I\right)\left(R(x)^{-1}L\ddot{\vec{q}}+(C R(x))^{-1}\vec{q}+R(x)^{-1}\vec{s}\right)
    \label{eqn:RLC_network}
\end{eqnarray}

Equation \eqref{eqn:RLC_network} is a governing equation for the RLC network. We can now examine some specific cases. 
In the case with very high resistance in the network edges connecting RLC motifs, $\lim_{R_N\rightarrow\infty} i_N\rightarrow 0$, we can simplify equation \eqref{eqn:RLC_network},
\begin{eqnarray}
    0 &=& \Omega_{A/R(x)}R(x)\left( R\dot{\vec{q}}-L\ddot{\vec{q}}-C^{-1}\vec{q}-\vec{s}\right)+ \left(\dot{\vec{q}} +R(x)^{-1} L\ddot{\vec{q}}+R(x)^{-1}C^{-1}\vec{q}+R(x)^{-1}\vec{s}\right)  .
\end{eqnarray}
The first term on the right-hand side enforces KVL within the RLC motifs with an additional constraint that the linear sum of potential across the memristive device edges, $(R(x)\dot{q}-\vec{s})$, equals zero. These are the linear sums remaining from the cycles spanning the network edges. The second term is the standard RLC dynamics in equation \eqref{eqn:IsolatedRLCmotif} for isolated RLC motifs.

When the resistance in the network edges is very low, as when the connections are of negligibly small resistance compared to the RLC elements, e.g., wires, we can approximate this as $\lim_{R_N\rightarrow 0}$. In this case, we have 
\begin{eqnarray}
    0 &=& \Omega_{A/R(x)}R(x)\left( R\dot{\vec{q}}-L\ddot{\vec{q}}-C^{-1}\vec{q}-\vec{s}\right)+ \left(i_N +\dot{\vec{q}} +R(x)^{-1} L\ddot{\vec{q}}+R(x)^{-1}C^{-1}\vec{q}+R(x)^{-1}\vec{s}\right)  .
\end{eqnarray}
The first term on the right-hand side is a voltage constraint enforcing KVL, similar to the previous case. The second term is a dynamical equation incorporating the network current $i_N$.

We examine some network topologies that simplify the circuit analysis and produce more tractable dynamical equations. If we have a network of RLC motifs in series each linked by a single network edge such that we form a ring of RLC motifs, we can again assign the LC edges to the cycle edges and the memristive edges to the tree edges. In the case of constant resistance in all the network edges, $r_N$, then we have a single cycle that spans all the memristive edges, $R(x)$ here, and the network edges,
\begin{eqnarray}
    0 &=& r_N i_N +R(x) i_r
    \\
    i_N &=& r_N^{-1}\left(L\ddot{\vec{q}} +C^{-1}q+\vec{s}\right)
\label{eqn:simpleRLCseries}
\end{eqnarray}
We can now impose $B\vec{i}=0$ using equations \eqref{eqn:simpleRLCseries},
\begin{eqnarray}
    0 &=& B(i_r+i_N+i_c) \nonumber
    \\
    &=& B(R(x)^{-1}+r_N^{-1})\left(\vec{s} +L\ddot{\vec{q}} +C^{-1}\vec{q}\right)+B\dot{\vec{q}}
    \nonumber \\
    &=& B (R^\prime)^{-1} \left(\vec{s}+L\ddot{\vec{q}}+C^{-1}\vec{q}\right) +B\dot{\vec{q}}
    \label{eqn:RLCseries-KCL}
\end{eqnarray}
Here we write $i_{cl} = \dot{\vec{q}}$ as above and have an equivalent resistance $R^\prime=(R(x)^{-1}+r_N^{-1})^{-1}$. 
To solve for $\ddot{\vec{q}}$ we use $L\ddot{\vec{q}} + C^{-1} \vec{q}= -\vec{v} = -B^t\vec{p}=R  i_r- \vec{s}$,
where $\vec{p}$ is the $n-1$ dimensional potential vector defined on the nodes, and from which
\begin{equation}
     i_r=R^{-1}[ C^{-1} \vec q+ L\ddot{\vec q}+\vec s].
\end{equation}

We can now write equation \eqref{eqn:RLCseries-KCL} as 
\begin{equation}
    0=(B(R^\prime)^{-1}B^t)^{-1} B \dot{\vec q}-\vec p+(B (R^\prime)^{-1}B^t)^{-1} B (R^\prime)^{-1}\vec s   .
\end{equation}
If we now multiply the equation above by $B^t$ on the left, we obtain
\begin{eqnarray}
    0&=&B^t(B(R^\prime)^{-1}B^t)^{-1} B \dot{\vec q}-B^t\vec p+B^t(B (R^\prime)^{-1}B^t)^{-1} B (R^\prime)^{-1}\vec s \nonumber \\
    &=&\Omega_{B/(R^\prime)^{-1}}R^\prime\dot{\vec q}+B^t\vec p+\Omega_{B/(R^\prime)^{-1}}\vec s
\end{eqnarray}
where we defined $B^t(B(R^\prime)^{-1}B^t)^{-1} B (R^\prime)^{-1}\equiv \Omega_{B/(R^\prime)^{-1}}$.

With this,
\begin{eqnarray}
    0 &=& -B(R^\prime)^{-1}(B^t \vec{p}) + B\dot{\vec{q}}  + B(R^\prime)^{-1} \vec{s} \nonumber\\
    {} &=& -\vec{p} +(B(R^\prime)^{-1}B^t)^{-1} B\dot{\vec{q}} + (B(R^\prime)^{-1}B^t)^{-1} B(R^\prime)^{-1} \vec{s} \nonumber \\
    {} &=& L \ddot{\vec{q}} + C^{-1} \vec{q} + (\Omega_{B/(R^\prime)^{-1}} R^\prime)\dot{\vec{q}} + \Omega_{B/(R^\prime)^{-1}} \vec{s}.
\end{eqnarray}

This can be written as a first-order linear system of equations,
\begin{eqnarray}
    \begin{pmatrix}
    \dot{\vec q} \\
    \ddot{\vec q}
    \end{pmatrix} =
    \begin{pmatrix}
    0 & I\\
    -(LC)^{-1} & -L^{-1}(\Omega_{B/(R^\prime)^{-1}}R^\prime)
    \end{pmatrix}
    \begin{pmatrix}
    \vec{q} \\
    \dot{\vec{q}} 
    \end{pmatrix} +
    \begin{pmatrix}
    0 \\
    -L^{-1}\Omega_{B/(R^\prime)^{-1}}\vec{s}
    \end{pmatrix}   .
    \label{eqn:RLC_matrix_1}
\end{eqnarray}

This gives eigenvalues of $\lambda_{\pm}=\frac{-\Omega_{B/(R^\prime)^{-1}}R^\prime}{2L}\pm\sqrt{\frac{\left(\Omega_{B/(R^\prime)^{-1}}R^\prime\right)^2}{4 L^2}-\frac{1}{LC}}$. We again obtain imaginary eigenvalues for the underdamped case, where $\frac{\Omega_{B/(R^\prime)^{-1}}R^\prime}{2}\sqrt{\frac{C}{L}}<1$.

We can also examine the case of a network of RLC motifs in series and without internal sources. For now, we work in the general case with nonzero sources. We incorporate the resistance from the network edges by introducing an equivalent impedance, a resistance $R_c$, on the LC edges. We assign the LC edges to the tree edges and the memristive edges to the cycle edges. The RLC motifs satisfy KVL,
\begin{eqnarray}
L\ddot{\vec{q}} + C^{-1} \vec{q} +R_c  i _c=R  i_r-\vec s .
\end{eqnarray}
Proceeding as in the previous case, we can remove the network edges and link the RLC motifs directly in series. The total current comprises only the current on the memristive and LC edges, $\vec{i}=i_r+i_c$. Now with $B(i_r+i_c)=0$, we can again solve for $\ddot{\vec{q}}$ using $L\ddot{\vec{q}} + C^{-1} \vec{q}+R_c \dot{\vec q} = -\vec{v} = -B^t\vec{p}=R  i_r-\vec s,$
where $\vec{p}$ is the $n-1$ dimensional potential vector defined on the nodes, and from which
\begin{equation}
     i_r=R^{-1}[ C^{-1} \vec q+ L\ddot{\vec q}+R_c \dot{\vec q}+\vec s].
\end{equation}

With this we have,
\begin{eqnarray}
    0 &=& -BR^{-1}(B^t \vec{p}) + B\dot{\vec{q}}  + BR^{-1} \vec{s} \nonumber\\
    {} &=& -\vec{p} +(BR^{-1}B^t)^{-1} B\dot{\vec{q}} + (BR^{-1}B^t)^{-1} BR^{-1} \vec{s} \nonumber \\
    {} &=& L \ddot{\vec{q}} + C^{-1} \vec{q} + (\Omega_{B/R^{-1}} R+R_c)\dot{\vec{q}} + \Omega_{B/R^{-1}} \vec{s}.
\end{eqnarray}

This may be cast as a first-order linear system of equations, 
\begin{align}
    \begin{pmatrix}
    \dot{\vec q} \\
    \ddot{\vec q}
    \end{pmatrix} =
    \begin{pmatrix}
    0 & I\\
    -(LC)^{-1} & -L^{-1}(\Omega_{B/R^{-1}}R+R_c)
    \end{pmatrix}
    \begin{pmatrix}
    \vec{q} \\
    \dot{\vec{q}} 
    \end{pmatrix} +
    \begin{pmatrix}
    0 \\
    -L^{-1}\Omega_{B/R^{-1}}\vec{s}  
    \end{pmatrix}  .
    \label{eq:rlc}
\end{align}

We can rewrite the RLC circuit motif by incorporating memristive devices into the resistor elements. We note the $R_c$ now depends on the resistance in the memristive device elements in a complicated way. Thus, we use $R_c(X(t))$. Here, we use the flipped parameterization and recast the system of equations with memristive device resistance $R(x)$.

\begin{align}
    \begin{pmatrix}
    \dot{\vec q} \\
    \ddot{\vec q}
    \end{pmatrix} =
    \begin{pmatrix}
    0 & I\\
    -(LC)^{-1} & -L^{-1}(\Ron \left( \Omega_A+\Omega_B\left(I+\xi X(t)\right)^{-1}\right)^{-1}\Omega_B+R_c(X(t)))
    \end{pmatrix}
    \begin{pmatrix}
    \vec{q} \\
    \dot{\vec{q}} 
    \end{pmatrix} \nonumber \\
    +\begin{pmatrix}
    0 \\
    -L^{-1}\left(\Omega_A+\Omega_B\left(I+\xi X(t)\right)^{-1}\right)^{-1}\Omega_B\left(I+\xi X(t)\right)^{-1}\vec{s}
    \end{pmatrix}
    \label{eqn:RLC-seriesLCTree}
\end{align}

If the resistance between RLC motifs is small compared to the LC impedance, then $R_c$ can be neglected; this is what we expect in most cases where good conductors, e.g., wires, connect the memristive device elements and inductors. In the case of large resistance between RLC motifs, we now examine the network conductance when the sources within the RLC motifs are zero, e.g., $s_e=0$. Then $R_c(X(t))$ reduces to $-R(x(t))$, where $R(x(t))$ is the resistance of the memristive elements. This allows us to simplify equation \eqref{eqn:RLC-seriesLCTree},
\begin{eqnarray}
    \begin{pmatrix}
    \dot{\vec q} \\
    \ddot{\vec q}
    \end{pmatrix} &=&
    \begin{pmatrix}
    0 & I\\
    -(LC)^{-1} & -L^{-1}(\Omega_{B/R^{-1}}-I)R(X)
    \end{pmatrix}
    \begin{pmatrix}
    \vec{q} \\
    \dot{\vec{q}} 
    \end{pmatrix} 
    \nonumber \\
    &=&
    \begin{pmatrix}
    0 & I\\
    -(LC)^{-1} & L^{-1}(\Omega_A+\Ron\Omega_B  R(X)^{-1})^{-1}\Omega_A R(X)
    \end{pmatrix}
    \begin{pmatrix}
    \vec{q} \\
    \dot{\vec{q}} 
    \end{pmatrix} 
    \label{eq:rlc2}
\end{eqnarray}
 In the supplementary material, we prove $(\Omega_{B/R^{-1}}-I)$ can be written in terms of the cycle projection matrix $\Omega_A$. In this case, the resistance is projected onto the cycles via $\Omega_A$. Compared with the case without memristive devices, equation \eqref{eqn:RLC_matrix_1}, there is an additional positive $L^{-1}R(x)$ term in the fourth element of the matrix. Thus, the change in the time constant $\tau=L R^{-1}$ in the cycles is a driving contribution to the current. 
  We expect the current to not propagate beyond individual RLC cycles in this network. This can be seen noting that $\ddot{\vec{q}}$ is proportionate to $(\Omega_{B/R^{-1}}-1)R(x)\dot{\vec{q}}$. The orthogonal complement to $\Omega_{B/R^{-1}}$ is related to the cycle projection matrix, $\Omega_A$, without any voltage bias within the RLC motif, the current will dissipate. We examine the eigenvalues of this system, $\lambda_\pm = \frac{-(\Omega_{B/R^{-1}}-I)R(x)}{2L} \pm \sqrt{\frac{(\Omega_{B/R^{-1}}-I)R(x)(\Omega_{B/R^{-1}}-I)R(x)}{4L^2}  -\frac{1}{LC}}$. We have imaginary eigenvalues in the under dampened case $\frac{(\Omega_{B/R^{-1}}-1)R(x)}{2}\sqrt{\frac{C}{L}} <1$, similarly $\frac{(\Omega_A+\Ron\Omega_B  R(X)^{-1})^{-1}\Omega_A R(X)}{2}\sqrt{\frac{C}{L}} <1$ .
 
 The techniques employed here can be generalized to other RLC circuit motifs with different arrangements of circuit elements.

\section{Lyapunov functions for purely memristive circuit motifs}

The Lyapunov functions provide a method to analyze the stability of memristive device networks at equilibrium while offering insight into control parameters that can be adjusted to enforce stability. The Lyapunov functions are scalar positive-definite functions, and we construct Lyapunov functions that capture the dynamics of the memristive device networks in the first-order time derivative. The conditions that ensure the time derivative of the Lyapunov functions is negative provide insight into the stable equilibrium conditions of memristive device networks. We explore the case of a dynamical equation where $R$ is linearly proportional to $x$, the case of asymmetric resistance, and when $R$ is proportional to a nonlinear function, $G_n(x)$. In addition, we study a few specific examples including the nonlinear case with $G_n(x)=\tanh{(x)}$, akin to activation functions in neural networks, and explicitly examine the role of a window function in the resistance.

\subsection{Lyapunov functions for resistance of the form $R(x)=a + x b$}
Let us discuss the Lyapunov function for the set of dynamical equations for memristive device networks, derived in earlier papers \cite{Caravelli2019Ent,caravelli2021}, but that will serve as an introduction to the generalized Lyapunov functions. We begin with
\begin{eqnarray}
\frac{d}{dt} \vec x=\alpha \vec x-\frac{1}{\beta}(I+\xi \Omega X)^{-1} \Omega \vec s, \label{eq:dyne}
\end{eqnarray}
where $X$ is a diagonal matrix of the memory parameters in the network. We will call $\Omega \vec s=\vec y$ in the following.
For the equation of motion above, we have
\begin{equation}
    (I + \xi \Omega X)\dot{\vec{x}} = \alpha \vec{x} + \alpha \xi \Omega X \vec{x} - \frac{1}{\beta} \vec{y}
\end{equation}
Consider
\begin{equation}
    L = -\frac{1}{3} \vec{x}\ ^t X \vec{x} - \frac{\xi}{4} \vec{x}\ ^t X\Omega X \vec{x}
    +\frac{1}{2\alpha\beta} \vec{x}\ ^t X\vec{y}.
\end{equation}
In this case, we have
\begin{align}
    \frac{dL}{dt} &= \dot{\vec{x}}\ ^t\left(- X\vec{x} - \xi X\Omega X\vec{x} + \frac{1}{\alpha\beta} X\vec{y} \right) \nonumber\\
    {} &= -\frac{1}{\alpha}\dot{\vec{x}}\ ^t X \big( \alpha \vec{x} + \alpha \xi \Omega X \vec{x} - \frac{1}{\beta} \vec{y}\big) \nonumber\\
    {} &= -\frac{1}{\alpha}\dot{\vec{x}}\ ^t (X + \xi X\Omega X)\dot{\vec{x}} \nonumber\\
    &= -\frac{1}{\alpha}\dot{\vec{x} }^t \sqrt{X}(I + \xi \sqrt{X}\Omega \sqrt{X})\sqrt{X}\dot{\vec{x}} \nonumber\\
    &= -\frac{1}{\alpha}||\sqrt{X}\dot {\vec{x}}||^2_{ (I + \xi \sqrt{X}\Omega \sqrt{X})} 
\end{align}
and we have that $\frac{dL}{dt} \le 0$. We can move $\dot{x}$ to the left, as all the terms in $L$ are symmetric here. Any positive semi-definite symmetric matrix, e.g., $M=I+\xi\sqrt{X}\Omega\sqrt{X}$, can be decomposed into the product of a matrix and its transpose such that $M=Q^t Q$. Thus, we can rewrite the matrix norm of the Lyapunov function,
\begin{eqnarray}
    \frac{dL}{dt} &=-\frac{1}{\alpha}\dot{\vec{x} }^t \sqrt{X}Q^tQ\sqrt{X}\dot{\vec{x}} 
    \nonumber\\
    &=-\frac{1}{\alpha}\vert\vert \sqrt{X}\dot {\vec{x}}\vert\vert^2_{Q^tQ}
    \nonumber\\
    &= -\frac{1}{\alpha}\vert\vert Q\sqrt{X}\dot {\vec{x}}\vert\vert^2 
\end{eqnarray}
As an asymptotic form, in this case, we have $x_i(\infty)=\{1,0\}$ we have 
\begin{equation}
    L =  - \frac{\xi}{4} \vec{x}\ ^t \underline{\Omega} \vec{x}
    +\vec{x}\ ^t( \frac{1}{2\alpha\beta}\vec{y} -\frac{\xi}{4}\vec\Omega -\frac{1}{3}I) .
\end{equation}
where $\underline{\Omega}$ is the matrix $\Omega$ with diagonal elements removed and $\vec{\Omega}$ is the vector of diagonal elements.

Let us now consider the Lyapunov function in the standard parameterization. We have 
\begin{eqnarray}
\frac{d}{dt} \vec x=\frac{1}{\beta}(I-\chi \Omega X)^{-1} \vec y-\alpha \vec x,\label{eq:dyne1}
\end{eqnarray}
or
\begin{eqnarray}
(I-\chi \Omega X)\frac{d}{dt} \vec x=\frac{1}{\beta} \vec y-\alpha (I-\chi \Omega X)\vec x.
\end{eqnarray}
Because of the symmetry between the two differential equations, $\alpha\beta$ remains constant and $\xi\rightarrow -\chi$. We thus attempt to write a Lyapunov function of the form: 

\begin{equation}
    L^\prime =   \big(\frac{1}{3} \vec{x}\ ^t X \vec{x} -\frac{\chi}{4} \vec{x}\ ^t X\Omega X \vec{x}
    -\frac{1}{2\alpha\beta} \vec{x}\ ^t X\vec{y}\big).
\end{equation}
Taking a time derivative, we get
\begin{align}
    \frac{dL^\prime}{dt} &=  \dot{\vec{x}}\ ^t\left( X\vec{x} - \chi X\Omega X\vec{x} - \frac{1}{\alpha\beta} X\vec{y} \right) \nonumber\\
    &= -\frac{1}{\alpha} \dot{\vec{x}}\ ^t\ X\left(\frac{1}{\beta} \vec y-\alpha (I-\chi \Omega X)\vec x \right) \nonumber\\
     &=- \frac{1}{\alpha}\dot{\vec{x}}\ ^t (X - \chi X\Omega X)\dot{\vec{x}}\nonumber\\
    &= -\frac{1}{\alpha}\dot{\vec{x} }^t \sqrt{X}(I - \chi \sqrt{X}\Omega \sqrt{X})\sqrt{X}\dot{\vec{x}} \nonumber\\
    &= -\frac{1}{\alpha}||\sqrt{X}\dot {\vec{x}}||^2_{ (I - \chi \sqrt{X}\Omega \sqrt{X})} 
\end{align}
As with the other dynamics, we see that if $I-\chi \sqrt{X} \Omega \sqrt{X}\succ 0$, the Lyapunov function always has a negative derivative. However, it is not hard to see that $\sqrt{X} \Omega \sqrt{X}\prec 1$ (if $x_i\in [0,1]$), and since $0<\chi < 1$, thus the Lyapunov property also applies in this case. Thus, the dynamics are passive and asymptotically stable for both types of circuits.

\subsection{Proof that this works only for linear functions}
The procedure we employed above for deriving the Lyapunov function only applies if the resistance $R(x)$ is linear in the internal memory parameter $x$. In order to see this, we begin again with the equations of motion,
\begin{equation}
    \frac{1}{\alpha}(I+\xi \Omega G_n(X))\frac{d\vec x}{dt}= \vec x+\xi \Omega G_n(X) \vec x-\frac{1}{\alpha \beta}  \vec y.
\end{equation}
Here $G_n(X)$ is a nonlinear function for the resistance $R$, and throughout this section $\vec{y}$ is a control vector of the memristance from equation \eqref{eqn:controlschemes}, e.g., $\Omega \vec{S}$. This is equivalent to
\begin{eqnarray}
\begin{split}
    \frac{1}{\alpha}(G_n(X)+\xi G_n(X) \Omega G_n(X))\frac{d\vec x}{dt}&= G_n(X)\vec x -\frac{1}{\alpha \beta}  G_n(X)\vec y +\xi G_n(X)\Omega G_n(X) \vec x 
    \end{split}
\end{eqnarray}
We now consider a generic Lyapunov function of the form
\begin{equation}
   L(X)=a \vec x^t F(X) \vec x + b \vec x^t  G(X) \Omega G(X)\vec x - c \vec x^t Q(X)\frac{\vec y}{\alpha \beta}
\end{equation}

with $F(X)$ a diagonal matrix
\begin{align}
\frac{dL}{dt}&=\sum_i \partial_{x_i} L(x) \partial_t x_i \nonumber \\
&=\sum_i \frac{dx_i}{dt} \Big( a\big( x_i^2 F'(x_i) + 2 F(x_i) x_i \big)  
\nonumber\\& \qquad \qquad\qquad  + b \sum_j\big( (x_i G'(x_i)+G(x_i)) \Omega_{ij} G(x_j)x_j  + x_jG(x_j) \Omega_{ji} (G(x_i)+x_i G'(x_i)) \big)  - c(Q(x_i)+x_i Q'(x_i))\frac{ y_i}{\alpha \beta}\Big)\nonumber \\
&=\sum_i \frac{dx_i}{dt} \Big( a\big(  x_i F'(x_i) + 2F(x_i) \big)x_i  + 2b \sum_j\big( (x_i G'(x_i)+G(x_i)) \Omega_{ij} G(x_j)x_j \big) - c(Q(x_i)+x_i Q'(x_i))\frac{ y_i}{\alpha \beta}\Big)
\end{align}
 Now, we will show that $\frac{dL}{dt}$ is not guaranteed to be negative by examining the case of the external field and a quadratic diagonal term. Note that for the quadratic diagonal term, we must have
\begin{equation}
    a\left(x_i F'(x_i)+2 F(x_i)\right)=G_n(x_i),
    \label{eqn:QuadraticTerm}
\end{equation}
while for the external field term
\begin{equation}
    c\left(x_i Q'(x_i)+Q(x_i)\right)=G_n(x_i).
    \label{eqn:ExternalFieldTerm}
\end{equation}
We temporarily set $(a,c)=(1,1)$. In the supplementary material, we discuss solving the quadratic and external field terms. It follows that the full solution can be written, in terms of the free parameter $a_0$  as
\begin{eqnarray}
    F(z)&=&\frac{1}{z^2} (a_0+ \int^z  q G_n(q) dq ) \\
    Q(z)&=&\frac{1}{z} (a_0+ \int^z    G_n(q) dq )
\end{eqnarray}
which is the solution to the original problem. Let us assume that $G_n(z)=a_2 z$. Then we have
\begin{equation}
    F(z)=\frac{1}{z^2}\left(a_0+\frac{a_2}{3} z^3\right)
\end{equation}
If we require that $F(0)=0$ we obtain $a_0=0$, if we require $F(1)=1$ then $a_2=3$ which leaves us with the solution $F(z)=z$. For $Q$, to enforce $Q(0)=0$ and $Q(1)=1$ we need to chose $a_0=0$ and $a_1=2$.
Thus there is a disagreement in the value of $a_2$, which can be resolved by setting $c=\frac{2}{3}$ or $(a,c)=(\frac{1}{3},\frac{1}{2})$. 

We now note that the quadratic term is problematic and that we need to ask for a function such that we obtain the symmetric term, such that we can move $\frac{dx}{dt}$ to the left. If we ask for a function $G(x_i)$ such that
\begin{equation}
   2b\left( x_i G'(x_i)+G(x_i)\right)=  G(x_i),
\end{equation}
which means 
\begin{equation}
    \partial_z \log G(z)=\frac{G(z)}{z}((2b)^{-1}-1)\rightarrow G(z)=a_0 z^{(2b)^{-1}-1} .
\end{equation}
For this reason, a symmetric quadratic term in the Lyapunov function works only in the simple case where $G(x)$ is proportional to a power of $x$, and the simplest linear memristive device ($G(x)$ linearly proportionate to $x$) occurs when $b=\frac{1}{4}$. Therefore, the Lyapunov functions found above apply only if the resistance is linear in the internal memory parameter $x$.

\subsection{Asymmetric EOMs and extension to nonlinear components}
As we will see below, there is another way of obtaining a Lyapunov function for the case of generic memristive devices by relaxing the requirement that $M$ has to be symmetric. We consider the generic equations of motion, in which $g_n(X)$ is the Bernstein polynomial approximation we introduced earlier. We have
\begin{eqnarray}
    \frac{1}{\alpha} Z(X)(I+\xi  \Omega g_n(X))\frac{d\vec x}{dt}&=& Z(X)\vec x+\xi Z(X)\Omega g_n(X) \vec x  
    -\frac{1}{\alpha \beta}  Z(X)\vec{y}
    \label{eqn:Z_lyapunov}
\end{eqnarray}
for an arbitrary nonzero diagonal matrix $Z(X)$. We consider again a generic Lyapunov function of the form
\begin{equation}
   L(X)=a \vec x^t F(X) \vec x + b \vec x^t  G(X) \Omega G(X)\vec x - c \vec x^t Q(X)\frac{\vec y}{\alpha \beta}
\end{equation}
with $F(X), Q(X)$ and $G(X)$  diagonal matrices
\begin{eqnarray}
\frac{dL}{dt}&=&\sum_i \partial_{x_i} L(x) \partial_t x_i \nonumber \\
&=&\sum_i \frac{dx_i}{dt} \Big( a\big(  x_i F'(x_i) + 2F(x_i) \big)x_i  + 2b \sum_j\big( (x_i G'(x_i)+G(x_i)) \Omega_{ij} G(x_j)x_j \big) - c(Q(x_i)+x_i Q'(x_i))\frac{ y_i}{\alpha \beta}\Big)
\end{eqnarray}

We need to solve for the linear and diagonal quadratic terms, i.e., the equations
\begin{eqnarray}
     x_i F'(x_i)+2 F(x_i)&=&\frac{1}{a}Z(x_i) \\
     x_i Q'(x_i)+ Q(x_i)&=&\frac{1}{c}Z(x_i) .
\end{eqnarray}
For the asymmetric term, we need to be careful. We need in fact to satisfy
\begin{eqnarray}
    x_i G'(x_i)+G(x_i)&=& \frac{1}{2b}Z(x_i) .
\end{eqnarray}
Since we have now the freedom of choosing $Z(x_i)$, we choose
\begin{eqnarray}
Z(x_i)&=&  2b(x_i g_n'(x_i)+g_n(x_i)) \\
G(x_i)&=&g_n(x_i),
\end{eqnarray}
so that the quadratic coupling term is immediately satisfied.
Thus, we need to solve
\begin{eqnarray}
     x_i F'(x_i)+2 F(x_i)&=&\frac{2b}{a}\left(x_i g_n'(x_i)+g_n(x_i) \right)\\
     x_i Q'(x_i)+ Q(x_i)&=&\frac{2b}{c}\left(x_i g_n'(x_i)+g_n(x_i)\right)
\end{eqnarray}
which imposes $Q(x_i)=\frac{2b}{c}g_n(x_i)$. When $(b,c)=\pm(\frac{ 1}{4},\frac{ 1}{2})$ then $Q(x_i)=g_n(x_i)$. On the other hand, we have
\begin{equation}
    F(z)=\frac{1}{z^2}\left(a_0+\frac{2b}{a} \int^z q(q g_n'(q)+g_n(q)) dq\right).
\end{equation}
Choosing $a=-\frac{1}{3}$, $b=-\frac{1}{4}$ and $c=-\frac{1}{2}$, we obtain that
\begin{equation}
    L(X)=- \frac{1}{3}\vec x^t \cdot F(X) \vec x-\frac{1}{4} \vec x^t g_n(X)\cdot \Omega g_n(X) \vec x+ \frac{1}{2}\vec x^t\cdot g_n(X) \frac{\vec y}{\alpha \beta}
\end{equation}
from equation \eqref{eqn:Z_lyapunov}, we have
\begin{eqnarray}
    \frac{d}{dt} L(X)&=&-\frac{1}{\alpha} ||\frac{d \vec x}{dt}||^2_{Z(X)+\xi Z(X) \Omega g_n(X)} \nonumber \\
    &=&-\frac{1}{\alpha} ||\frac{d \vec x}{dt}||^2_{Z(X)+\frac{\xi}{2} (Z(X) \Omega g_n(X)+ g_n(X)^t \Omega Z(X))}
    \label{eqn:Lyapunov_assymetric}
\end{eqnarray}
We arrive at the second form in equation \eqref{eqn:Lyapunov_assymetric} by noting that the inner product of real vector space is symmetric, so we can convert the matrix norm to the symmetric form. We have $Z(X)=2b\left(g_n(X)+X g_n'(X)\right)$, thus our matrix is
\begin{eqnarray}
\begin{split}
    M&=Z(X)+\xi Z(x)\Omega g_n(X) 
    \\
    &= 2b\left(g_n(X)+X g_n'(X)+\xi g_n(X)\Omega g_n(X)+\xi X g_n'(X)\Omega g_n(X)\right)
\end{split}
    \label{eqn:M-Asymmetric}
\end{eqnarray}
which is symmetric if $g_n(X)$ is linear in $X$. 

In general, we can treat $g_n(X)$ as a nonlinear function akin to the Bernstein polynomial expansion studied earlier, e.g., $g_n=x^n(1-x^n)$. Here, we examine the case in which $g_n(X)=\tanh(X)$ explicitly, as it has the right properties to give analytical results. We have $Z(X)=2b\left(X \text{sech}^2(X)+\tanh(X)\right)$. 
The function $F(z)$ is a little complicated, and is
\begin{eqnarray}
    F(z)&=&\frac{1}{z^2}\Big(a_0+\frac{2b}{a}\big(\frac{1}{2} \left(\text{Li}_2\left(-e^{-2 z}\right)-z \left(z+2 \log \left(e^{-2
   z}+1\right)\right)\right) +z^2 \tanh (z)\big) \Big)
\end{eqnarray}
where $\text{Li}_s(z)=\sum_{k=1}^\infty \frac{z^k}{k^s}$ is the polylogarithm, also called the Jonquiere's function.
We now want to impose $F(0)=0$ and $F(1)=1$.
Then we need to choose $a_0=\frac{\pi^2 b}{12 a}$, and a normalization ${\mathcal{N}}=\frac{a}{b}$
\begin{eqnarray}
\mathcal N= \left(\text{Li}_2\left(-\frac{1}{e^2}\right)-1-2 \log
   \left(1+\frac{1}{e^2}\right)\right)+\frac{\pi ^2}{12}+2\tanh (1)
   \end{eqnarray}
   The result is
   \begin{equation}
       F(z)=\frac{1}{\mathcal N}\frac{ \left(\text{Li}_2\left(-e^{-2 z}\right)-z \left(z+2 \log \left(e^{-2
   z}+1\right)\right)\right)+2 z^2 \tanh (z)+\frac{\pi ^2}{12}}{ z^2 }
   \end{equation}

Thus, the Lyapunov function depends on the eigenvalues of the matrix
\begin{equation}
    M=X \text{sech}^2(X)+\tanh(X)+\xi (X \text{sech}^2(X)+\tanh(X)) \Omega \tanh(X) .
\end{equation}
Note that we can separate $M$ into symmetric and asymmetric components, $M=M_s+M_a$, where 
\begin{eqnarray}
M_s&=&X \text{sech}^2(X)+\tanh(X)+\xi \tanh(X) \Omega \tanh(X) \\
M_a&=& \frac{\xi}{2} (X \text{sech}^2(X) \Omega \tanh(X)+\tanh(X) \Omega X \text{sech}^2(X))
\end{eqnarray}
Now, it is clear that $M_s$ is positive semi-definite. We note that on $x_i\in [0,1]$ we have 
\begin{equation}
    \rho(x_i)=\frac{x_i \text{sech}^2(x_i)}{\tanh(x_i)}\leq 1
\end{equation}
thus we write
\begin{equation}
    M_a=\frac{\xi}{2} \tanh(X)(\rho(X)\Omega+\Omega \rho(X) )\tanh(X)
\end{equation}

We see that the obtained eigenvalues are always positive, and thus, this should guarantee that the $L(X)$ constructed is a Lyapunov function.

\subsection{Case with window function}
Let us now try to further generalize the Lyapunov function for memristive equations of motion with a window function \cite{Joglekar2009,ascoli2016history}. In this case, we have
\begin{equation}
    \frac{d\vec x}{dt}=\mathcal X(X)(\alpha \vec X- \frac{1}{\beta} (I+\xi \Omega g_n(X))^{-1} \vec y)
    \label{eqn:WindowFunctionDynamical}
\end{equation}
where ${\mathcal {X}}(X)$ is the window function. We can rewrite equation \eqref{eqn:WindowFunctionDynamical} as
\begin{eqnarray}
    \frac{1}{\alpha} Z(X)\Big(I+\xi \Omega g_n(X)\Big){\mathcal X}^{-1}(X)\frac{d\vec x}{dt}= Z(X)\vec X +\xi Z(X)  \Omega g_n(X)\vec X  - Z(X)\frac{\vec y}{\alpha \beta}   .
\end{eqnarray}
We see that the problem is the same as before, but now, for the same Lyapunov function we have, 
\begin{equation}
    \frac{dL}{dt}=-\frac{1}{\alpha} \vert\vert\frac{d\vec x}{dt}\vert\vert^2_{Z(X){\mathcal X}^{-1}(X)+\xi Z(X) \Omega g_{n}(X) \mathcal X^{-1}(X)}
\end{equation}
which has to be symmetrized. Note that $Z(X)$ and $\mathcal X^{-1}(X)$ are both diagonal and thus they commute and are symmetrical. Thus we have
\begin{eqnarray}
    M&=& Z(X)\mathcal X^{-1}(X) +\frac{\xi}{2} \Big(Z(X) \Omega g_n(X)\mathcal X^{-1}(X)+ \mathcal X^{-1}(X) g_n(X)  \Omega Z(X)\Big) .
\end{eqnarray}
Note that the spectral properties of the function above are rather different. In order to see this, we can expand $Z(X)$ and write it as $Z(X)=g_n(X)+X g_n'(X)$. Then
\begin{align}
    M&= Z(X)\mathcal X^{-1}(X) +\frac{\xi}{2} \Big(g_n(X) \Omega g_n(X)\mathcal X^{-1}(X)+ \mathcal X^{-1}(X) g_n(X)  \Omega g_n(X)\Big) 
    \nonumber\\ & \qquad\qquad\qquad\qquad +\frac{\xi}{2} \Big(Xg_n'(X) \Omega g_n(X)\mathcal X^{-1}(X)+ \mathcal X^{-1}(X) g_n(X)  \Omega g_n'(X) X\Big)   .
    \label{eqn:M-Windowfnct}
\end{align}
We see that in the third term on the right-hand side, we have an operator that is not symmetric, similar to the previous case. However, the second term on the right-hand side, which was symmetric and positive before, is no longer symmetric due to the window function. This implies that we can have negative eigenvalues of order $\xi$, and in the case of window functions, the Lyapunov function we have defined is not of general applicability. However, case-by-case analysis can identify when the window function applies to the Lyapunov function. We note that ${\mathcal{X}}^-$ is diagonal, so the second row in equation \eqref{eqn:M-Windowfnct} has similar symmetry restrictions as equation \eqref{eqn:M-Asymmetric}.

\section{Invariances}

Due to the projection operators, $\Omega$, numerous circuit configurations give rise to similar properties. This is apparent by noting we can identify components of the circuit that are orthogonal to the projection operator; these orthogonal components are identified with the orthogonal projection operator $\Omega^\perp$, e.g., $\Omega_A$ and $\Omega_B$ as described in the supplementary material. This produces various local gauge symmetries, wherein the electrical properties do not vary with the components of the circuit that span the orthogonal component; this freedom is a gauge freedom. For example, for an arbitrary matrix $G(x)$ or voltage source $\vec{s}$, under the projection operator there is freedom in an orthogonal component, $\Omega_\perp \tilde{G}(x)$ and $\Omega_\perp \vec{\gamma}$, respectively. 
\begin{eqnarray}
        \Omega \left( G(x)+\Omega_\perp\tilde{G}(x)\right) &=& \Omega\left(G(x) +\left(I-\Omega\right)\tilde{G}(x)\right) \nonumber
        \\
        & =&\Omega G(x)
\\
\Omega \left(\vec{s}+\Omega_\perp \vec{\gamma}\right) &=& 
        \vec{y}
\end{eqnarray}
Equivalently, there is gauge invariance in any arbitrary vectors or diagonal matrixs $Y$ that are a projection from a ground truth ${\mathcal{Y}}$, such that $Y=\Omega{\mathcal{Y}}$:
\begin{equation*}
\begin{split}\Omega\cdot Y &=\Omega \cdot\Omega \mathcal{Y}
\\
&= \Omega \cdot\Omega^n Y .
\end{split}
\end{equation*}

The projection operator may have extensive gauge freedom in any circuit, but examining the circuit properties and memristive dynamics can restrict this freedom. For example, consider $\Omega Y= \Omega(Y\pm k\Omega^\perp \Gamma)$, with diagonal matrices $Y$ and $\Gamma$. Note $\Gamma$ here and $\vec{\gamma}$ are not necessarily related. We can devise functional forms of $Y$ that produce equivalent orthogonally projected vectors, 
\begin{equation}
\begin{split}
Y &\rightarrow \left( Y+\sum k_n \Omega^\perp \Gamma\right)
\\
&= f(\Omega^\perp,Y,\Gamma)
\end{split}
\end{equation}
where $f$ in a nonlinear function in $\Omega^\perp$, $Y$ and $\Gamma$. Consider a  nonlinear function $f(\Omega^\perp,Y,\Gamma)= \exp{\left( \Omega^\perp \Gamma\right)} Y$. As $\Omega$ is a non-orthogonal projector operator, it is necessary to track the action of the projection operators to ensure gauge freedom in an arbitrary nonlinear function. For example, consider the following two functions:
\begin{eqnarray}
\Omega f_1(\Omega^\perp,Y,\Gamma) &=&\Omega \exp{\left(\Omega^\perp{\Gamma}\right)}{Y} \nonumber
\\
&=&\Omega {Y}
\\
\Omega f_2(\Omega^\perp,Y,\Gamma)&=& \Omega {Y}\exp{\left(\Omega^\perp{\Gamma}\right)} \nonumber
\\
 &=& \Omega \sum_{n=0}^\infty \frac{{Y}}{n!}\left(\Omega^\perp{\Gamma}\right)^n \neq \Omega Y
\end{eqnarray}

Thus, the specific network properties need to be examined on a case-by-case basis to ensure invariance, as they could contribute to otherwise anomalous behaviors in the network properties. 

Finally, there is also invariance within circuit configurations; a transformation between equivalent circuits preserves the eigenspectrum of $\Omega$. The projection operators can be transformed via orthogonal matrices, $\Omega_1=O^t\Omega_2  O$; when $\Omega$ is the cycle matrix different matrices, $O$, respect different loop equivalent circuits. Since permutations are a subgroup of orthogonal transformations, they also leave the spectrum of $\Omega$ invariant (and are interpreted as edge relabeling).   

  We examine how such a transformation affects the dynamics of the memristive network. For example, we can rewrite the equations of motion with an orthogonally similar projection operator,
\begin{equation}
    \frac{d\vec x}{dt}=\frac{1}{\beta}(I+\xi
    O^t \Omega_2 O G(X))^{-1} O^t \Omega_2 O \vec S-\alpha \vec x  .
\end{equation}
We can see that while we have freedom in choosing the orthogonal matrices, $O$, restrictions on $O$ or nodal permutations may be required to preserve the network dynamics. Such restrictions will provide insight into the form of equivalent circuits. We investigate this below.

Here, we explore the effect of three invariances on the equations of motion of $x$, the current $\vec{i}$, and the Lyapunov functions. We examine invariant functions of  the types $Y+\Omega_\perp \vec{\gamma}$, $\Omega Y$ and $O^t \Omega O$. We work with a voltage source in series, $\vec{s}$, in the flipped parameterization:
\begin{eqnarray}
\vec i &= -R_{\text{on}}^{-1}(I+\xi\Omega_{\tilde A} G(X))^{-1} \Omega_{\tilde A} \vec s\label{eq:currents}
\end{eqnarray}
The results from this case are generalizable to the direct parameterization and other sources, e.g., current sources. \\
There is a more general invariance lurking in the background, which generalizes the one we just discussed. The analysis above can be used to understand the relaxational dynamics of memristors near equilibria. To see this, we consider  a network of memristive current-driven devices, whose internal memory is 
\begin{eqnarray}
   \frac{d\vec x}{dt}=-\alpha \vec x +\frac{R_\text{on}}{\beta} \vec i ,
\end{eqnarray}
which leads again to equation \eqref{eq:dyne} replacing equation \eqref{eqn:iExpandedOmageS} for $G(X)=X$. As we have seen, the transformation $\vec s\rightarrow \vec s+(I-\Omega_{\tilde A}) \vec \gamma$ for arbitrary vectors $\vec \gamma$ leaves the dynamics invariant, since $\Omega_{\tilde A}(I-\Omega_{\tilde A})=0$. This invariance does not take into account the fact that we have memristive devices. However, in memristive devices described by equation \eqref{eq:dyne} this is a subset of a larger invariance set, which is more generally a transformation of the form
\begin{eqnarray}
    (\vec x,\vec s)\rightarrow (\vec x\ ^\prime,\vec s\ ^\prime)=(\vec x+\delta \vec x,\vec s+\delta \vec s)
\end{eqnarray}
which preserves the dynamics, such that the time derivative is invariant, e.g., $\frac{d\vec x}{dt}=\frac{d\vec x\ ^\prime}{dt}$. To see this, let us equate
\begin{eqnarray}
    (I-\chi \Omega_{\tilde A} X)^{-1}\Omega_{\tilde A} \frac{\vec s}{\beta}-\alpha \vec x 
    &=&(I-\chi \Omega_{\tilde A} (X+\delta X))^{-1}\Omega_{\tilde A} \frac{(\vec s+\delta \vec s)}{\beta} -\alpha (\vec x+\delta \vec x)
\end{eqnarray}
After a few algebraic steps, we get the following generalized relationship:
\begin{eqnarray}
    (I-\Omega_{\tilde A})\delta \vec x&=&0 \\
    \delta \vec s&=&\alpha \beta \big(I-\chi (X+\delta X)\big)\delta \vec x-\chi \delta X(I-\chi \Omega_{\tilde A} X)^{-1}\Omega_{\tilde A} \vec s  +(I-\Omega_{\tilde A})\vec \gamma \label{eq:meminv}
\end{eqnarray}
for any vector $\vec \gamma$. To see the origin of this generalized invariance, consider for instance resistances $R_1,\cdots,R_n$ in series with $k$ generators $s_i$. We have
\begin{eqnarray}
    \frac{\sum_{j=1}^k s_j}{\sum_{j=1}^n R_j}= i.
\end{eqnarray}
To preserve the current flowing in the mesh, we can either change $R_j$ and $s_j$ in such a way that the ratio is preserved. For networks of memristors, equation \eqref{eq:meminv} generalizes this simple invariance. Since this is valid for arbitrary derivatives, it must be true also for the particular case of $\frac{d\vec x}{dt}=0$. As a result, this invariance maps equilibrium points as a function of a change in external control. 

\subsection{ Orthogonal vector component}
Let us examine the case where we have voltage sources with an orthogonal component applied in series, such that we can write our source as $\vec{S}+\Omega_A^\perp \vec{\gamma}$. We examine the orthogonal components' role in the time evolution of the resistance, 
\begin{equation}
\begin{split}
    \frac{d}{dt}\vec{x}(t) &=\alpha\vec{x}(t)-\frac{1}{\beta} \left(I+\xi \Omega_A X(t)\right)^{-1} \Omega_A \left(\vec{S}+\Omega_A^\perp \vec{\gamma}\right)
    \\
    &=\alpha\vec{x}(t)-\frac{1}{\beta} \left(I+\xi \Omega_A X(t)\right)^{-1} \Omega_A \vec{S}
\end{split}
\label{eqn:dotw-expandedi}
\end{equation}
The resistance in the network is independent of the orthogonal component of the source.
We now examine the Lyapunov functions of linear memristive device elements in the presence of an orthogonal contribution to the resistance. Consider the Lyapunov function,
\begin{eqnarray}
    L=-\frac{1}{3}x^t X x -\frac{\xi}{4}x^t X \Omega_A X x +\frac{1}{2\alpha\beta}x^t X \Omega_A\left(  \vec{S}+\Omega_A^\perp \vec{\gamma}\right)  ,
\end{eqnarray}
\begin{eqnarray}
    \frac{dL}{dt}=\frac{-1}{\alpha}\vert\vert \sqrt{X}x\vert\vert_{\left(I+\xi\sqrt{X}\Omega_A\sqrt{X}\right)} .
\end{eqnarray}
Again, we obtain that if $I+\xi\sqrt{X}\Omega_A\sqrt{X}\succ 0$, the Lyapunov function always has a negative derivative. As $\sqrt{X}\Omega_A\sqrt{X} >0$ in the domain $x\in\left[0,1\right]$, and $\xi>0$, the Lyapunov function has a negative derivative and the same passive dynamics and stability conditions as the system without orthogonal sources.

Similarly we can examine the case where there is a component of the resistance that is orthogonal to $\Omega_A$, $x(t)\rightarrow x^\prime(t)+\Omega^\perp_A \vec{\gamma}$, similarly $X(t)\rightarrow X^\prime(t)+\Omega^\perp_A\vec{\Gamma}$. We examine the orthogonal components' role in the equations of motion,
\begin{eqnarray}
\begin{split}  
    \dot{x}^\prime(t)+\Omega_A^\perp \dot{\vec{\gamma}} & =\alpha x^\prime(t)+\alpha\Omega_A^\perp \vec{\gamma}
   -\frac{1}{\beta}\left(I+\xi\Omega_A\left(X^\prime(t)+\Omega_A^\perp\vec{\Gamma}\right)\right)^{-1}\Omega_A\vec{s}
\end{split}
\end{eqnarray}
\begin{eqnarray}
\begin{split}  
    \dot{x}^\prime(t) &=\alpha x^\prime(t)+\Omega_A^\perp \left(\alpha\vec{\gamma} -\dot{\vec{\gamma}}\right)
     -\frac{1}{\beta}\left(I+\xi\Omega_A X^\prime(t)\right)^{-1}\Omega_A\vec{s}
\end{split}
\label{eqn:dotw_AdditiveGauge}
\end{eqnarray}
The time derivative of $x$ has an additional component projected into $\Omega_A^\perp$. Here, we note that without loss of generality, $x^\prime(t)$ has no orthogonal component as any orthogonal component can be in $\vec{\gamma}$; therefore, we can split equation \eqref{eqn:dotw_AdditiveGauge} into projected and orthogonally projected components. The orthogonal component has dissipative dynamics of the memristive device elements,
\begin{eqnarray}
    \dot{\vec{\gamma}}=\alpha\vec{\gamma} .
    \label{eqn:gammadynamcis}
\end{eqnarray}
The dissipative dynamics are short-term dynamics in the orthogonally projected space, as there could be a contribution to the dynamics of $\vec\gamma$ from the non-orthogonal component. Thus, equation \eqref{eqn:gammadynamcis} does not fix the long-term dynamics. 
Treating the dynamics of the orthogonal and non-orthogonal components as separable we see $\vec{\gamma}$ does not change the electrical current in the network.
We rewrite the current using an expansion similar to equation \eqref{eqn:iExpandedOmageS} for the flipped parameterization,
\begin{eqnarray}
    \vec{i} &=& - R_{\text{on}}^{-1}A^t\left( A A^t+ \xi A\left(x^\prime(t)+\Omega_A^\perp \vec{\gamma}\right)A^t\right)^{-1}A^t \vec{S}_{\text{in}}
    \nonumber \\
    &=&- R_{\text{on}}^{-1} \left( I+\xi \Omega_A\left(x^\prime(t)+\Omega_A^\perp \vec{\gamma}\right)\right)^{-1}\Omega_A \vec{S}_{\text{in}}
\end{eqnarray}
The orthogonal term $\Omega_A^\perp \vec{\gamma}$ does not contribute to the electrical current. Next, we examine the Lyapunov functions in the presence of an orthogonal component of the circuit. We rewrite the equations of motion, equation \eqref{eqn:dotw_AdditiveGauge},
\begin{eqnarray}
    \left(I+\xi\Omega_A X^\prime(t)\right)\dot{x}^\prime(t) - \Omega_A^\perp \left(\alpha\vec{\gamma} -\dot{\vec{\gamma}}\right)&=\alpha \left(x^\prime(t)+\xi\Omega_A X^\prime(t)x^\prime(t)
 -\frac{1}{\alpha\beta}\Omega_A \vec{s}\right).
 \label{eqn:dotw_W+Gauge}
\end{eqnarray}
We work with the transformed Lyapunov function,
\begin{align}
     L&=-\frac{1}{3}(x^\prime+\Omega_A^\perp\vec{\gamma})^t (X^\prime+\Omega_A^\perp\vec{\Gamma}) (x^\prime+\Omega_A^\perp\vec{\gamma}) 
     \nonumber \\ & \qquad\qquad\qquad\qquad   -\frac{\xi}{4}(x^\prime+\Omega_A^\perp\vec{\gamma})^t (X^\prime+\Omega_A^\perp\vec{\Gamma})  \Omega_A (X^\prime+\Omega_A^\perp\vec{\Gamma})  (x^\prime+\Omega_A^\perp\vec{\gamma})  
     +\frac{1}{2\alpha\beta}(x^\prime+\Omega_A^\perp\vec{\gamma})^t (X^\prime+\Omega_A^\perp\vec{\Gamma})  \Omega_A \vec{s}
\end{align}
We continue to work in the regime where $x^\prime(t)$ does not have a component orthogonal to $\Omega_A$ and thus is perpendicular to $\Omega_A^\perp\vec{\gamma}$. We can separate $\dot{x}^\prime$ and $\dot{\vec{\gamma}}$ in equation \eqref{eqn:dotw_W+Gauge}, and we have
\begin{eqnarray}
    \frac{dL}{dt}&=&-\dot{x^\prime}^t X^\prime\left(x^\prime+\xi\Omega_A X^\prime x^\prime-\frac{1}{\alpha\beta}\Omega_A\vec{s}\right)-\dot{\vec{\gamma}}^t\Omega_A^\perp\Gamma\Omega_A^\perp\vec{\gamma}
   \nonumber \\
    &=& -\frac{1}{\alpha}\dot{x^\prime}^t X^\prime\left(I+\xi\Omega_A X^\prime\right)\dot{x}^\prime-\frac{1}{\alpha}\dot{\vec{\gamma}}^t \Omega_A^\perp \Gamma \Omega_A^\perp \dot{\vec{\gamma}}
    \nonumber \\
    &=&  -\frac{1}{\alpha}\vert\vert \sqrt{X^\prime}\dot{x}^\prime+ \sqrt{\Gamma}\Omega_A^\perp\dot{\vec{\gamma}}\vert\vert_{I+\xi\sqrt{X^\prime}\Omega_A\sqrt{X^\prime}}
\end{eqnarray}
Where we have used the dissipative dynamics of $\gamma$ from equation \eqref{eqn:gammadynamcis}. The Lyapunov function again has a semi-definite negative time derivative. It has a similar form as the Lyapunov functions without an orthogonal source term. Thus, the memristive device networks with distinct orthogonal sources have the same passive dynamics and stability conditions.

As $x^\prime$ and $\vec{\gamma}$ are separable we can separate these into two Lyapunov functions for the orthogonal and non-orthogonal components, 
\begin{eqnarray}
 \frac{d L}{dt} &=& -\frac{1}{\alpha}\vert\vert \sqrt{X^\prime}\dot{x}^\prime\vert\vert_{I+\xi\sqrt{X^\prime}\Omega_A\sqrt{X^\prime}}
 \label{eqn:LyapunovProjected_AdditiveGauge}
 \\
 \frac{d L^\perp}{dt} &=& -\frac{1}{\alpha}\vert\vert \dot{\vec{\gamma}}\vert\vert_{\Omega_A^\perp\Gamma\Omega_A^\perp}     
 \end{eqnarray}
These two Lyapunov functions follow trivially from equation \eqref{eqn:dotw_AdditiveGauge} by applying $\Omega_A$ and $\Omega_A^\perp$. In this case, we have two equations:
\begin{eqnarray}
    \Omega_A\dot{x}^\prime(t) &=\Omega_A\alpha x^\prime(t)
     -\frac{\Omega_A}{\beta}\left(I+\xi\Omega_A X^\prime(t)\right)^{-1}\Omega_A\vec{s}
     \label{eqn:Omegadotw_additiveGauge}
\end{eqnarray}
\begin{eqnarray}
     \Omega_A^\perp \dot{\vec{\gamma}} &=& \Omega_A^\perp \alpha\vec{\gamma} .
\end{eqnarray}
Equation \eqref{eqn:Omegadotw_additiveGauge} reduces to the normal Lyapunov functions studied above, reproducing equation \eqref{eqn:LyapunovProjected_AdditiveGauge}. We can define a Lyapunov function for the orthogonal component.
\begin{eqnarray}
    L &=& -\frac{\alpha}{2}\vec{\gamma}^t\Omega_A^\perp \vec{\gamma}
    \\
     \frac{dL}{dt} &=& -\alpha\dot{\vec{\gamma}}^t\Omega_A^\perp \vec{\gamma}
     \nonumber \\
     &=& -\vert\vert \dot{\vec{\gamma}}\vert\vert_{\Omega_A^\perp} 
\end{eqnarray}
Where we have used the dissipative dynamics of $\vec{\gamma}$. Again, we see the orthogonal component has passive dynamics. 

\subsection{Projected vectors}
Here, we examine the case where we have some voltage sources applied in series that is a projection of a larger space, such that we can write our source vector as $\vec{y}=\Omega \mathcal{Y}$. This is another form of the previous case, wherein $\mathcal{Y}$ has some orthogonal component; however, now we do not know an \textit{ab initio} trivial separation into orthogonal and non-orthogonal components. We examine the equations of motion in the case of an orthogonal component in a source term,
\begin{eqnarray}
    \frac{d}{dt}\vec{x}(t) &=& \alpha\vec{x}(t)-\frac{1}{\beta} \left(I+\xi \Omega_A X(t)\right)^{-1} \Omega  \mathcal{Y} \nonumber \\
    &=& \alpha\vec{x}(t)-\frac{1}{\beta} \left(I+\xi \Omega_A X(t)\right)^{-1} \Omega \Omega \mathcal{Y} \nonumber \\
    &=& \alpha\vec{x}(t)-\frac{1}{\beta} \left(I+\xi \Omega_A X(t)\right)^{-1} \Omega \vec{y}
\end{eqnarray}
The time derivative of $x$ does not depend on the orthogonal components in $\mathcal{Y}$. Next, we examine the time derivative of $x(t)$ when there is a component of memristive devices that are orthogonal to the projection operator, $x(t)\rightarrow\Omega_A x^\prime(t)$, using a modified form of equation \eqref{eqn:dotw-expandedi}, 
\begin{eqnarray}
    \Omega_A\dot{x}^\prime(t)=\alpha\Omega_A x^\prime(t)-\frac{1}{\beta}\left(I+\xi\Omega_A\Omega_A X^\prime(t)\right)^{-1}\Omega_A\vec{s}
    \\
    \dot{x}(t)=\Omega_A\left(\alpha x^\prime(t)-\frac{1}{\beta}\left(I+\xi\Omega_A X^\prime(t)\right)^{-1} \Omega_A\vec{s}\right) .
\end{eqnarray}
The time derivative of $x(t)$is just the projection of $\dot{x}^\prime(t)$; there is no contribution from the orthogonal component. Similarly, we can see the electrical current in the network depends only on the projected component. The electrical current can be rewritten using the expansion similar to equation \eqref{eqn:iExpandedOmageS} 
 and equation \eqref{eqn:InvDerivation}, 
    \begin{eqnarray}
    \begin{split}
    \vec{i} &=- R_{\text{on}}^{-1}A^t\left( A A^t + \xi A x^\prime(t)A^t\right)^{-1}A \vec{S}_{\text{in}}
    \\
    &=- R_{\text{on}}^{-1}A^t\left( A\left( I +  \xi \Omega_A x(t)\right)A^t\right)^{-1}A \vec{S}_{\text{in}}
    \\
    &= - R_{\text{on}}^{-1} \left( I+\xi \Omega_A\Omega_A x(t)\right)^{-1}\Omega_A \vec{S}_{\text{in}}
    \\
    &= - R_{\text{on}}^{-1} \left( I+\xi \Omega_A x(t)\right)^{-1}\Omega_A \vec{S}_{\text{in}}
    \end{split}
    \end{eqnarray}

 Thus, the current in the network depends only upon the projection of $x$ onto  $\Omega_A$. This implies when the resistance is linear in $x$, then $R(X) \rightarrow \Omega_A R(X)$, and the current is invarient to $R\rightarrow \Omega_A R^\prime$. We verify this,
    \begin{eqnarray}
        \vec{i}&=& -A^t(A \Omega_A R^\prime A^t)^{-1}A\vec{S}_\text{in} 
       \nonumber \\
       &=& -R_\text{on}^{-1} A^t(A\Omega_A A^t +A\Omega_A\xi x^\prime A^t)^{-1}A\vec{S}_\text{in } \nonumber \\
       &=& -R_\text{on}^{-1} \sum \left(-A^t(A\Omega_A A^t)^{-1}A \Omega_A\xi x^\prime\right)^k A^t (A\Omega_A A^t)^{-1} A \vec{S}_{\text{in}} \nonumber\\
       &=& -R_\text{on}^{-1} \sum (1+\Omega_A \xi x )^{-1}\Omega_A \vec{S}_{in}
      \nonumber \\
       &=& -A^t(ARA^t)^{-1}A\vec{S}_\text{in}
    \end{eqnarray}

 We now examine the Lyapunov functions of linear memristive devices as a function of $x^\prime$. We have the equations of motion 
\begin{eqnarray}
    (I+\xi\Omega_A X^\prime(t))\Omega_A\dot{x}^\prime(t)=\alpha\left(\Omega_A x^\prime(t)+\alpha\xi \Omega_A X^\prime(t)\Omega_A x^\prime(t)-\frac{1}{\alpha\beta}\Omega_A\vec{s}\right)
\end{eqnarray}
We study the time derivative of a suitable Lyapunov function:
\begin{eqnarray}
    L=-\frac{1}{3} x^{\prime \, t}\Omega_A X^\prime  \Omega_A x^\prime -\frac{1}{4}\xi x^{\prime\,t}\Omega_A X^\prime \Omega_A X^\prime \Omega_A x^\prime +\frac{1}{\alpha\beta} x^{\prime\,t}\Omega_A X^\prime \Omega_A\vec{s}
\end{eqnarray}
\begin{eqnarray}
\begin{split}
    \frac{d L}{dt} &=-\dot{x}^{\prime\, t}\Omega_A X^\prime\left( \Omega_A x^\prime+\xi\Omega_A X^\prime \Omega_A x^\prime -\frac{1}{\alpha\beta}\Omega_A\vec{s}\right)
    \\
    &= \frac{-1}{\alpha}\dot{x}^{\prime \, t}\Omega_A X^\prime\left(I +\xi\Omega_A X^\prime\right)\Omega_A \dot{x}^\prime
    \\
    &= \frac{-1}{\alpha}\vert\vert \sqrt{X^\prime}\Omega_A\dot{x}^\prime\vert\vert_{(I+\xi\sqrt{X^\prime}\Omega_A \sqrt{X^\prime})}
\end{split}
\end{eqnarray}
Thus, the Lyapunov functions are projections of $\Omega_A$. As in the case with an orthogonal source, we are guaranteed that $\frac{d L}{d t}\leq 0$, but the stability conditions depend on the projection of the resistance given by $\Omega_A \dot{x}^\prime$.

\subsection{The equivalence class of spectrally invariant projectors}

It was shown in 
\cite{caravelliwein} that an orthogonal transformation on the matrix $\Omega_A$ can be interpreted as edge rewiring that leaves the cycle content of the graph invariant. Since the graph-theoretical content is contained in the projector matrix, networks that are orthogonally related have orthogonally similar projection operators. We can interrelate these projection operators by the transformation via orthogonal matrices, $O$. A simple example is shown here for a cycle matrix, $A$:

\begin{eqnarray}
    A &=& \begin{pmatrix}
        1&1&0&-1&0 \\
        0&0&1&1&-1
    \end{pmatrix}
    \\
    O&=&\begin{pmatrix}
        1&0&0&0&0 \\
        0&0&0&0&-1 \\
        0&0&1&0&0 \\
        0&0&0&1&0 \\
        0&1&0&0&0 \\
    \end{pmatrix}
    \\
    (O A^t)^t&=&\begin{pmatrix}
        1&0&0&-1&1 \\
        0&1&1&1&0
    \end{pmatrix}
\end{eqnarray}

This transformed network has an orthogonally similar projection matrix, $\Omega_A \rightarrow  O^t \Omega_A O$. We now investigate the memristive device properties of networks with orthogonally similar projection matrices, $\Omega_A$  and $O^t\Omega_A O$. We rewrite the equations of motion of $x(t)$ with this transformed projection operator for a voltage source, $\vec{S}$, in series,
\begin{eqnarray}
    \dot{x}(t)=\alpha x(t)-\frac{1}{\beta}\left( I+\xi O^t\Omega_A O x(t)\right)^{-1} O^t\Omega_A O \vec{S}  .
\end{eqnarray}
Obviously $\dot{x}(t)$ is invariant when $O^t\Omega_A O$ respects symmetries in $x$ and $\vec{Y}$ such that $O^t\Omega_A O x(t)=\Omega_A x(t)$,  $O^t\Omega_A O \vec{S}=\Omega_A\vec{S}$, this is true if $\Omega_A$ respects the orthonormal basis of $O$ and $O$ acts as a permutation operator. We gain further insight by examining the electrical current in the network, using the Woodbury matrix identity, equation \eqref{eqn:expansion}, discussed above, 
\begin{eqnarray}
    \vec{i} &=& -R_{\text{on}}^{-1}\sum_{k=0}^\infty (-1)^k \left( O^t\Omega_A O \xi x(t)\right)^k O^t \Omega_A O \vec{S}_\text{in}
    \nonumber \\
     &=& -R_{\text{on}}^{-1} O^t\sum_{k=0}^\infty (-1)^k \left(\Omega_A \xi O x(t) O^t\right)^k \Omega_A O \vec{S}_{\text{in}}
\end{eqnarray}
Thus, the transformation $\Omega_A\rightarrow O^t\Omega_A O$ is equivalent to a dual transformation:
\begin{eqnarray}
    \Omega_A\rightarrow O^t\Omega_A O \Leftrightarrow \begin{matrix}
        \vec{i}(x(t),\vec{S}_{\text{in}})\rightarrow  \vec{i}(x^\prime(t),O\vec{S}_{\text{in}}) \\
        x(t)\rightarrow x^\prime(t) \;\vert\, O x(t)O^t
    \end{matrix}
\end{eqnarray}
To examine the role of this dual transformation on the time evolution of $x(t)$, we rewrite equation \eqref{eq:memr2},
\begin{eqnarray}
        \dot{x}(t) &=& \alpha x(t)-\frac{R_{\text{on}}}{\beta} O^t A\left( A^t O R O^t A\right)^{-1}A^t O \vec{S}_\text{in}  .  
\label{eqn:wdot_gauge3}
\end{eqnarray}
Thus, the transformation $\Omega_A\rightarrow O^t\Omega_A O$ corresponds to a transformation of the orthogonal projectors with an orthogonally similar resistance matrix,
\begin{eqnarray}
\Omega_R\rightarrow O^t \Omega_{O R O^t}O ,
\end{eqnarray}
therefore $\dot{x}(t)$ is invariant to this orthogonal transformation if $\Omega_R \leftrightarrow O^t \Omega_{O R O^t}O$. 

Next, we examine the Lyapunov function of linear memristive devices under an orthogonal transformation. 
We rearrange equation \eqref{eqn:wdot_gauge3} to rewrite the equations of motion,
\begin{eqnarray}
    (I+\xi O^t\Omega_A O X)\dot{x}=\alpha\left( x+\xi O^t\Omega_A O X x -\frac{1}{\alpha\beta}O^t\Omega_A O \vec{S}_\text{in}\right)
\end{eqnarray}

We find an appropriate Lyapunov function and examine the time derivative; we have
\begin{eqnarray}
    L &=&-\frac{1}{3}x^t X x-\xi \frac{1}{4}x^t X O^t \Omega_A O X x +\frac{1}{2\alpha\beta}x^t X O^t\Omega_A O \vec{S}_{\text{in}}
\\
\frac{d L}{dt} &=&-\dot{x}^t X\left(x+\xi O^t\Omega_A O X x-\frac{1}{\alpha\beta}O^t\Omega_A O\vec{S}_{\text{in}} \right)
\nonumber \\
&=&\frac{-1}{\alpha}\vert\vert \sqrt{X}\dot{x}\vert\vert_{\left(I+\xi\sqrt{X}O^t \Omega_A O\sqrt{X}\right)}
\end{eqnarray}   
Therefore, under this transformation, the Lyapunov functions are equivalent to the original case (without orthogonal transformations) when $O\sqrt{X(t)}=\sqrt{X(t)}$, a similar restriction when $A=A O^t$; this is true when $O=I$ or when $O$ respects the symmetries of the circuits such that a permutation leaves the circuit unchanged. Therefore, the Lyapunov functions of the orthogonally transformed system are identical only if $O^t\Omega_A O=\Omega_A$. Here, the Lyapunov function has a negative derivative if $I+\xi\sqrt{X} O^t \Omega_A O \sqrt{X}\succ 0$. 

We can move the orthogonal matrix from the matrix norm to better compare the dynamics to the original case. We substitute $O X O^t\rightarrow X^\prime$, and we can rewrite the time derivative of the Lyapunov function as
\begin{eqnarray}
  \frac{dL}{dt}&=&\frac{-1}{\alpha}\vert\vert \sqrt{X^\prime} O\dot{x}\vert\vert_{\left(I+\xi\sqrt{X^\prime} \Omega_A \sqrt{X^\prime}\right)}  .
\end{eqnarray}

Thus, networks related by an orthogonal transformation, $O$, will similarly have a Lyapunov function with a negative derivative. As the inner product is invariant to orthogonal transformations, $\vert\vert \sqrt{X^\prime} O\dot{x}\vert\vert =\vert\vert \sqrt{X} \dot{x}\vert\vert $. Here, it is apparent the stability conditions are similar to the non-transformed case.

\section{Laplacian matrix and the projection operator}

We note a deep connection between the graph Laplacian matrix and the projection operators studied above. First, we prove that the graph Laplacian $L$ can be constructed from the index matrix and $G$ is the $(m \times m)$ diagonal matrix of conductivity values. 

\begin{theorem}
    The graph Laplacian can be written as $L=B G B^t$
    \label{thm:Laplacian}
\end{theorem}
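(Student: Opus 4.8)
The plan is to verify the identity entrywise, using the explicit definition of the incidence matrix $B$ given above together with the standard definition of the weighted graph Laplacian. Recall that for a graph with conductance weights $g_e$ on the edges, the weighted Laplacian is the $n\times n$ matrix $L = D - W$, where $W$ is the weighted adjacency matrix ($W_{kl} = g_{\{k,l\}}$ if $\{k,l\}\in E$ and $0$ otherwise) and $D=\mathrm{diag}\big(\sum_{l}W_{kl}\big)$ is the diagonal matrix of weighted degrees. I would show that $(BGB^t)_{kl}$ coincides with $L_{kl}$ for every pair $(k,l)$.

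First I would expand $(BGB^t)_{kl} = \sum_{e\in E} b_{k,e}\, g_e\, b_{l,e}$. For the diagonal entries $k=l$, observe that $b_{k,e}^2 = 1$ exactly when $e$ is incident on $k$ (regardless of the orientation assigned to $e$) and $0$ otherwise; hence $(BGB^t)_{kk} = \sum_{e \ni k} g_e$, which is precisely the weighted degree $D_{kk}$.

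Next, for $k\neq l$, a term $b_{k,e} b_{l,e}$ is nonzero only if the edge $e$ is incident on both $k$ and $l$. Since the graph admits no self-edges and no duplicate edges, there is at most one such edge, namely the edge joining $k$ and $l$; if it exists, one of $b_{k,e}, b_{l,e}$ equals $+1$ and the other $-1$ (whichever way the arbitrary orientation was fixed), so their product is $-1$. Therefore $(BGB^t)_{kl} = -g_{\{k,l\}}$ if $\{k,l\}\in E$ and $0$ otherwise, which matches $-W_{kl}$. Combining the two cases yields $BGB^t = D - W = L$.

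The computation is elementary; the only point requiring care is that the result is independent of the arbitrary edge orientations encoded in $B$ — in the diagonal entries the orientation is squared away, and in the off-diagonal entries the two incidence signs are always opposite, so the sign of the product is fixed at $-1$. As a consistency check I would also note that $BGB^t$ is manifestly symmetric and positive semidefinite, and that $BGB^t\vec{1}=0$ since each column of $B$ contains exactly one $+1$ and one $-1$ (so $B^t\vec{1}=0$), in agreement with the known spectral properties of the graph Laplacian.
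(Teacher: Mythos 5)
Your proof is correct and follows essentially the same route as the paper's: an entrywise expansion of $(BGB^t)_{kl}=\sum_e b_{k,e}\,g_e\,b_{l,e}$, splitting into the diagonal case (weighted degree) and the off-diagonal case (where the two incidence signs are opposite, giving $-g_{\{k,l\}}$). Your added remarks on orientation-independence and the consistency check $B^t\vec{1}=0$ are nice touches but do not change the argument.
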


\begin{proof}
    We examine the Laplacian matrix element-wise, noting that $G$ is a diagonal matrix: 
    \begin{eqnarray}
        L_{i,j} &=& \sum_k B_{i,k}B_{j,k}G_{k,k} \nonumber\\
        &=& \sum_k e^k_{i,}e^k_{j,}G_{k,k}  .
        \end{eqnarray}

    Here $e^{k}_{i,}$ indicates the incidence of edge $e^k$ on node $n_i$. As each edge is incident upon only two nodes, then if $i\neq j$, the product $e^k_{i,}e^k_{j,}$ is nonzero only if $e^k$ connects nodes $n_i$ and $n_j$. If $i=j$, then the sum is over all edges connected to node $n_i$. Therefore, we can find the elements of $L$:
        \begin{eqnarray}
        L_{i,j}&=& \begin{cases}
            -G_{k,k} \; |\;  \text{iff} \, e^k_{i,j} \in \{E\},\. i\neq j \\
            0 \; |\;   e^k_{i,j} \notin \{E\}, \. i\neq j \\
            \sum_k G_{k,k} \; |\;    \forall e^k_{i,} \in \{E\},\. i=j
        \end{cases} ,
    \end{eqnarray}
   this reproduces the graph Laplacian, completing the proof.
\end{proof}

We can immediately see that the pseudoinverse of the graph Laplacian, $(BGB^t)^+$ is related to the projector operator, $\Omega_{B/R^{-1}}$:
\begin{eqnarray}
    \Omega_{B/R^{-1}} &=& B^t(BGB^t)^+BG \nonumber \\
    &=& B^t L^+BG .
    \label{eqn:L_ProjectorProduct}
\end{eqnarray}
This retains the projection operator properties of $\Omega$,
\begin{eqnarray}
    \Omega_{B/R^{-1}}\Omega_{B/R^{-1}} &=&\Omega_{B/R^{-1}} \nonumber \\
     &=& B^tL^+BGB^tL^+BG    = B^tL^+BG  .
\end{eqnarray}

We can similarly relate $\Omega_{B/R^{-1}}$ to the Laplacian matrix,
\begin{eqnarray}
    BG\Omega_{B/R^{-1}}B^t &=& LL^+L \nonumber \\
    &=& L ,
\end{eqnarray}
and by multiplying on the left by $(BG)^+$ and on the right by $B^{t\,+}$ we can equate,
\begin{subequations}
\begin{align}
    \Omega_{B/R^{-1}} &= (BG)^+ L B^{t\, +} 
    \label{eqn:OmegaLaplacian1}\\
    &= B^tL^+BG   
    \label{eqn:OmegaLaplacian2}
\end{align}
\end{subequations}
where we have used equation \eqref{eqn:L_ProjectorProduct} to arrive at equation \eqref{eqn:OmegaLaplacian2}. Equating the right hand sides of equations \eqref{eqn:OmegaLaplacian1} and \eqref{eqn:OmegaLaplacian2}, we see that $\Omega_{B/R^{-1}}$ is an involutory matrix within the non-orthogonal projector operator subspace. In the case of the unweighted Laplacian matrix, we can follow the derivation above using $\Omega_B$ and substituting the identity matrix for $G$.

Similarly, we can relate the Laplacian matrix and $\Omega_{B/R^{-1}}$ in the case of the circuit with current injected at the nodes. Using equation \eqref{eqn:CurrentConservation_Injection} we can equate the inner products,
\begin{eqnarray}
    j_\text{ext}^tL^+ j_\text{ext}&=& \vec{i}^t\Omega_{B/R^{-1}}\vec{v},
\end{eqnarray}
 here $j_\text{ext}$ is the current injected at the nodes, $\vec{i}$ is the current configuration and $\vec{v}$ is a voltage drop across the circuit edges. By dimensional analysis we can see that this equates the system's power due to the injected current with the current and voltage configurations.

 Finally, we can follow a similar procedure to generate a matrix based on the cycle matrix $A$ related to the cycle projection operators $\Omega_A$. Using an arbitrary diagonal matrix, $G$, denoting the properties of the edges in the circuit, we define a matrix
 \begin{eqnarray}
     K &=& A G A^t \\
     K_{i,j} &=& \sum_k A_{i,k}A_{j,k}G_{k,k} .
 \end{eqnarray}
 
 Examining $K$ element-wise, when $i\neq j$, then $K_{i,j}$ is the sum of elements $G_{k,k}$ corresponding to all edges $e^k$ which are part of both cycles $c_i$ and $c_j$; the sign of $G_{k,k}$ in the sum indicates whether the orientation of cycles $c_i$ and $c_j$ are aligned or not on the edge $e^k$, $(+)$ wen the cycles are aligned, $(-)$ when they are not aligned. As we are able to separate our fundamental cycle graph $\tilde{A}$ into $A_T$ and $A_C$ sub-matrices, wherein $A_C$ is an identity matrix, then $K$ generated from $\tilde{A}$ will have no component of the cycle edges in the off-diagonal elements as the cycle edges are never in more than one cycle. The diagonal elements, $i=j$, of $K$ will consist of a sum of all elements $G_{k,k}$ which correspond to the edges $e^k$ which form the cycle $c^i$. Therefore we can define $K$ as
\begin{eqnarray}
        K_{i,j}&=& \begin{cases}
            \sum \pm G_{k,k} \; |\;  \forall \, e^k \in c_i \cap c_j,\. i\neq j \\
            0 \; |\;   c_i \cap c_j =\{\emptyset \}, \. i\neq j \\
            \sum_k G_{k,k} \; |\;    \forall e^k \in \{c_i\},\. i=j
        \end{cases}  ,
    \end{eqnarray}
which is related to a projection operator $\Omega_{A/G}$ as 
\begin{eqnarray}
    \Omega_{A/G} = GA^tK^+A .
\end{eqnarray}
In the supplementary material, we discuss how this matrix can be used to calculate the power in the cycles of a circuit.

\section{ Effective resistance and memristive device correlations}

In experimental studies of memristor networks it is often necessary to characterize the networks resistance without accessing all of the individual memristors. Instead one is often restricted to two-point or four-point probe measurements to characterize the network, in which case the effective resistance between contacts is being measured. Thus it is important to understand how the current and resistance in distant but accessible edges are related. This understanding could enable new methods to control the resistance between input and output edges. 

Here, we use the effective resistance to study correlations between memristive device elements. The effective resistance can be calculated between individual edges, e.g., memristive devices, in the network. In the supplementary material, we develop a correlation analysis in the case where charge moves through a circuit via a random walk mechanism. Here, we generalize the correlation analysis to circuits under applied bias. By finding effective resistances between nodes in a network, we can examine the correlation of electrical current in two distant, non-adjacent, edges by constructing an effective Wheatstone bridge circuit.

\subsubsection{Two-point effective resistance}

We begin by finding the effective resistance between nodes. The effective resistance, $R^\text{eff}$, between the $n_a$ and $n_b$ can be found by examining the $(a,b)$ walks and removing the closed cycle walks which return to $n_a$, $(a,a)$, and similarly examining the reverse walk from $(b,a)$: 
\begin{eqnarray}
R^{\text{eff}}_{(a,b)}(a,b) &=& -L^+(a,b)-L^+(b,a)+L^+(a,a)+L^+(b,b) .
\label{eqn:2point_effectiveresistance_main}
\end{eqnarray}
The effective resistance can be considered the resistance connecting two distant nodes via a hypothetical edge, it is the resistance such that if we apply a single volt between $n_a$ and $n_b$ the measured current $i_{a,b}$ at our generator would be $\frac{1}{R^{\text{eff}}_{(a,b)}}$. This is the definition of the two-point effective resistance, which is motivated in the supplementary material by  a Markov process for particle movement in a circuit. We rederive the two-point effective resistance below. We can similarly find the effective resistance by constructing an incidence matrix for the hypothetical edge we are interested in. We construct a vector, $\vec{i}_{a,b}$, that represents sending a single particle from $n_a$ to $n_b$ and is charge balanced otherwise. We define
\begin{eqnarray}
    B \vec{i}_{a,b}={\mathfrak{b}}_{(a,b)} ,
\end{eqnarray} 
here ${\mathfrak{b}}_{(a,b)}$ is an $n$-vector that is $1$ in $a$, $-1$ in $b$, and zero otherwise. We can write
\begin{eqnarray}
    R^{\text{eff}}_{(a,b)}= {\mathfrak{b}}^t_{(a,b)} L^+ {\mathfrak{b}}_{(a,b)} 
\end{eqnarray}

From which we can see a transfer current projector operator reproduces the node projection operator, \begin{eqnarray}
    \Omega_{B/G} G^{-1}&=& B^t L^+ B
    \\
    R^\text{eff}_{a,b}&=& \vec{i}_{a,b}^t\Omega_{B/G} G^{-1} \vec{i}_{a,b} .
\end{eqnarray}
Thus the effective resistance is related to projection operators by the single particle current vectors.  As $\Omega_{B/G}G^{-1}$ is symmetric, the two point effective resistance is the positive semi-definite inner product of the single particle current vectors.

\subsubsection{Effective resistive circuit}
If $e_{a,b}$ and $e_{c,d}$ are edges in our network ($e^0$ and $e^1$ respectively), with corresponding currents $i_0$ and $i_1$, then we can define a correlation between $i_0$ and $i_1$. We define an effective circuit with effective resistors between the four nodes that define the edges $e^0$ and $e^1$, $R_{a,c}$, $R_{a,d}$, $R_{b,c}$, and $R_{b,d}$; here we drop the $_\text{eff}$ superscript. This allows us to construct a simple graph consisting of the edges $e_{a,b}$ and $e_{c,d}$ ($e^0$ and $e^1$) with their effective resistance $R^\text{eff}_{0}$ and $R^\text{eff}_{1}$ (distinct from the true resistance values $R_0$ and $R_1$) and the effective edges $e_{a,c}$, $e_{a,d}$, $e_{b,c}$, and $e_{b,d}$ with their calculated effective resistance. A schematic of this graph is shown in Figure \ref{fig:effectiveResistance}.

\begin{figure}[ht]
    \centering
    \includegraphics[width=.6\textwidth]{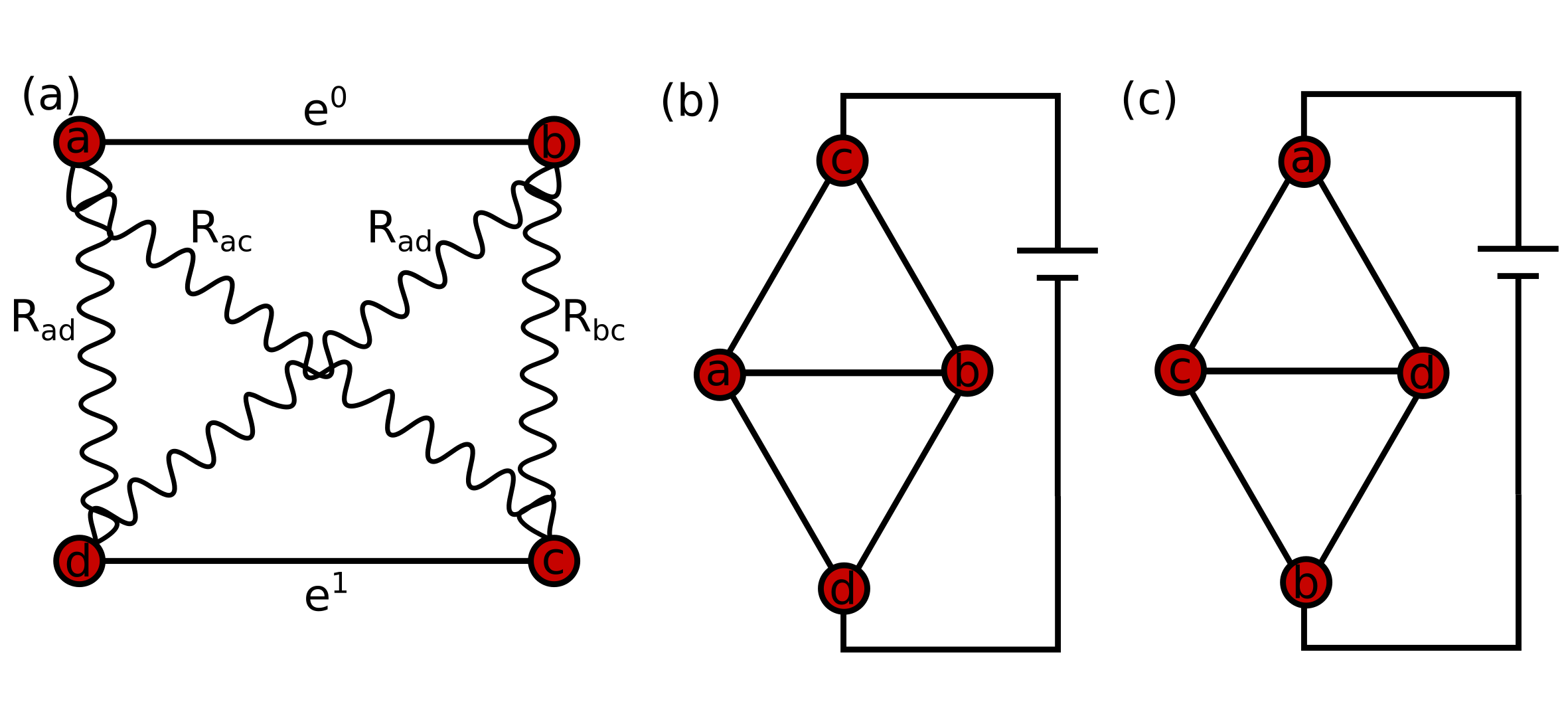}
    \caption{(a) The effective circuit linking $n_a$ and $n_b$ to $n_c$ and $n_d$ with six effective edges with effective resistances. (b) and (c) two representations of the unbalanced Wheatstone bridge circuit with sources $\Delta V_1$ and $\Delta V_0$.}
    \label{fig:effectiveResistance}
\end{figure}
To find the effective resistance of each edge in our effective circuit, we need to calculate the resistance between each effective edge that preserves the power dissipation in the network. 
This is a generalization of the Star-Delta transformation of resistors. We can find such a transformation to an effective circuit by considering a partition function $Z$ defined by a Gaussian model of the total power in the circuit. Here $\phi$ is the potential on the nodes of our full network, and the power in the system is defined as $P=\sum_{i,j} (\phi_i-\phi_j)^2/R_{ij}$. We define a partition function for the power of our circuit and integrate out the electric potential from all but the four nodes that remain in our effective circuit, $\{n_a.n_b,n_c,n_d\}$. 

We define,
\begin{eqnarray}
    Z &=&\int d\phi_0 \cdots d\phi_{n-4} \exp{\left(\frac{-\beta}{2}\sum_{i,j}\frac{(\phi_i-\phi_j)^2}{R_{i,j}}\right)} 
    \nonumber \\
    &=& \int d\phi_B \exp{\left( \frac{-\beta}{2}\vec{\phi}^t L \vec{\phi}\right)} \nonumber
    \\ &=& \int  d\phi_B \exp{\left(\frac{-\beta}{2} \phi_B^t L_{BB} \phi_B+\phi_S^t L_{SS} \phi_S +\phi_S^t L_{SB}\phi_B+\phi_B^t L_{BS} \phi_S\right)}
\end{eqnarray}
where $\phi_B$ is $\vec{\phi}_{0\cdots n-4}$, the vector of electric potentials excluding the four nodes in the effective circuit, e.g., the bulk in the network, and $\phi_S= \vec{\phi}_{n-3\cdots n}$, e.g., the surface of the network. We have written the power in terms of the Laplacian matrix for the circuit, $L$. Here we have rearranged our Laplacian and $\vec{\phi}$ such that $\{n_a,n_b,n_c,n_d\}$ are the last nodes in our vector $\vec{\phi}$. The matrices $L_{BB}$, $L_{BS}$, $L_{SB}=L_{BS}^t$, and $L_{SS}$ are the block elements of the Laplacian matrix:
\begin{eqnarray}
    L=\begin{pmatrix}
        L_{BB} & L_{BS}
        \\ L_{BS}^t &L_{SS}
    \end{pmatrix}.
\end{eqnarray}
$L_{BB}$ is a $(n-4)\times(n-4)$ symmetric submatrix, $L_{BS}$ is a $4\times(n-4)$ submatrix, and $L_{SS}$ is $4\times4$ submatrix.
Performing the integral,
\begin{eqnarray}
    Z=\sqrt{\frac{(2\pi)^{n-4}}{\beta^{n-4}\det(L_{BB})}}\exp{\left( \frac{\beta}{2}\phi_S^t\left(L_{BS}^tL_{BB}^{-1}L_{BS}-L_{SS}\right)\phi_S\right)} ,
    \label{eqn:SimplifiedCorrelation_4}
\end{eqnarray}
here $L_{BB}$ must be a real positive-definite matrix, as $L_{BB}$ is a submatrix of $L$ it is not necesarily a positive semi-definite matrix. We can rewrite our Gaussian integral in terms of the bottom $4\times 4$ block, e.g., the $S\times S$ block, of the inverse Laplacian matrix\footnote{Note that this is the result is the same as the bottom corner of a block-wise matrix inversion (Schur inverse),
\begin{eqnarray}
    \begin{pmatrix} A & B \\ C & D \end{pmatrix}^{-1}= 
    \begin{pmatrix}
        A^{-1}+A^{-1}B\left(D-CA^{-1}B\right)^{-1}CA^{-1} & -A^{-1}B\left(D-CA^{-1}B\right)^{-1}
        \\
        -\left(D-CA^{-1}B\right)^{-1} C A^{-1} & \left(D-CA^{-1}B\right)^{-1}
    \end{pmatrix}
\end{eqnarray}}:
 \begin{eqnarray}
     Z\propto \exp{\left( \frac{-\beta}{2}\phi_S^t (L^+_{4\times4})^+\phi_S\right)}.
 \end{eqnarray}
Here $(L^+_{4\times4})^+$ is the effective Laplacian for our effective circuit, $L^{\text{eff}}$; it is symmetric, but in general, the rows and columns do not sum to $0$ and is not necessarily a singular matrix. We find the exact effective resistance matrix,  
\begin{eqnarray}
    L^{\text{eff}} &=& B G^\text{eff}B^t \\
    L^+_{4x4} &=& \left(B G^\text{eff}B^t\right)^+
    \\
    G^{\text{eff}+} &=& B^t L^+_{4\times 4} B \nonumber\\
    &=& \tilde{R}^{\text{eff}} ,
\end{eqnarray}
where $B$ is the incidence matrix for our effective circuit, which we define 
    \begin{eqnarray}
B = \bordermatrix{
  & e^{0} & e_{ad} & e_{ac} & e_{bd} & e_{bc} & e^1 \cr
  n_a &1  &  -1 &  1 & 0 & 0 & 0 \cr
n_b &-1 &  0 &  0 & -1 & 1 & 0 \cr
n_c &0  & 0 & -1 & 0 & -1  & 1\cr
n_d & 0 & 1 & 0 & 1 & 0 & -1 \cr
}   .
\end{eqnarray} 
We can write $\tilde{R}^\text{eff}$ in terms of the elements of the $L^+_{4\times 4}$ matrix:
\begin{align}
   & \tilde{R}^\text{eff}= \nonumber \\ & \colvec{
L^+_{aa} - L^+_{ab} - L^+_{ba} + L^+_{bb} & - L^+_{aa} + L^+_{ad} + L^+_{ba} - L^+_{bd} & L^+_{aa} - L^+_{ac} - L^+_{ba} + L^+_{bc} & - L^+_{ab} + L^+_{ad} + L^+_{bb} - L^+_{bd} & L^+_{ab} - L^+_{ac} - L^+_{bb} + L^+_{bc} & L^+_{ac} - L^+_{ad} - L^+_{bc} + L^+_{bd}
\\- L^+_{aa} + L^+_{ab} + L^+_{da} - L^+_{db} & L^+_{aa} - L^+_{ad} - L^+_{da} + L^+_{dd} & - L^+_{aa} + L^+_{ac} + L^+_{da} - L^+_{dc} & L^+_{ab} - L^+_{ad} - L^+_{db} + L^+_{dd} & - L^+_{ab} + L^+_{ac} + L^+_{db} - L^+_{dc} & - L^+_{ac} + L^+_{ad} + L^+_{dc} - L^+_{dd}
\\L^+_{aa} - L^+_{ab} - L^+_{ca} + L^+_{cb} & - L^+_{aa} + L^+_{ad} + L^+_{ca} - L^+_{cd} & L^+_{aa} - L^+_{ac} - L^+_{ca} + L^+_{cc} & - L^+_{ab} + L^+_{ad} + L^+_{cb} - L^+_{cd} & L^+_{ab} - L^+_{ac} - L^+_{cb} + L^+_{cc} & L^+_{ac} - L^+_{ad} - L^+_{cc} + L^+_{cd}
\\- L^+_{ba} + L^+_{bb} + L^+_{da} - L^+_{db} & L^+_{ba} - L^+_{bd} - L^+_{da} + L^+_{dd} & - L^+_{ba} + L^+_{bc} + L^+_{da} - L^+_{dc} & L^+_{bb} - L^+_{bd} - L^+_{db} + L^+_{dd} & - L^+_{bb} + L^+_{bc} + L^+_{db} - L^+_{dc} & - L^+_{bc} + L^+_{bd} + L^+_{dc} - L^+_{dd}
\\L^+_{ba} - L^+_{bb} - L^+_{ca} + L^+_{cb} & - L^+_{ba} + L^+_{bd} + L^+_{ca} - L^+_{cd} & L^+_{ba} - L^+_{bc} - L^+_{ca} + L^+_{cc} & - L^+_{bb} + L^+_{bd} + L^+_{cb} - L^+_{cd} & L^+_{bb} - L^+_{bc} - L^+_{cb} + L^+_{cc} & L^+_{bc} - L^+_{bd} - L^+_{cc} + L^+_{cd}
\\L^+_{ca} - L^+_{cb} - L^+_{da} + L^+_{db} & - L^+_{ca} + L^+_{cd} + L^+_{da} - L^+_{dd} & L^+_{ca} - L^+_{cc} - L^+_{da} + L^+_{dc} & - L^+_{cb} + L^+_{cd} + L^+_{db} - L^+_{dd} & L^+_{cb} - L^+_{cc} - L^+_{db} + L^+_{dc} & L^+_{cc} - L^+_{cd} - L^+_{dc} + L^+_{dd}}
\end{align}
The power dissipation of this system is now $P=\Delta \phi_S^t \tilde{R}^{\text{eff}+}\Delta \phi_S$, where $\Delta \phi_S$ is the difference electric potential, $\Delta \phi_S=B^t\phi_S$. Importantly, the diagonal elements are the effective resistance between pairs of nodes, equation \eqref{eqn:2point_effectiveresistance_main}, the off-diagonal terms are cross talk which defines a power loss between current in distinct edges, e.g., $P_{i,j}=\vec{i}_i \tilde{R}^{\text{eff}}_{i,j}\vec{i}_j$. Due to the presence of these off-diagonal terms in $\tilde{R}^\text{eff}$, it is more convenient to determine the effective conductance, and thus resistance, defined by the off-diagonal terms of $L^\text{eff}$, e.g., $R_{ab}=\left(L^\text{eff}_{a,b}\right)^{-1}$. In the supplementary material, we discuss a mean field approximation of the power dissipation in our effective circuit using the effective resistance $\tilde{R}^\text{eff}$.   

The above methods can be extended to reproduce the effective resistance between two nodes. We can rewrite equation \eqref{eqn:SimplifiedCorrelation_4} as 
\begin{eqnarray}
    Z &=&\sqrt{\frac{(2\pi)^{n-4}}{\beta^{n-4}\det(L_{BB})}}\int d\phi_a d\phi_b \exp{\left( -\frac{\beta}{2}[\phi_a,\phi_b,\phi_c,\phi_d] (L^+_{4\times4})^+ [\phi_a,\phi_b,\phi_c,\phi_d]^t\right)} 
    \nonumber \\
    &\propto& \exp{\left( -\frac{\beta}{2}[\phi_c,\phi_d] (((L^+_{4\times4})^+)^+_{2\times2})^+ [\phi_c,\phi_d]^t\right)}   ,
\end{eqnarray}
 the matrix in the exponent can be simplified $(((L^+_{4\times4})^+)^+_{2\times2})^+=(L^+_{2\times2})^+$. Using a two-point incident matrix, $B$, the two-point effective resistance is written in term of the $L^+_{2\times2}$ as:
\begin{eqnarray}
    \tilde{R}^\text{eff}_{2\times2}=L^+_{cc}+L^+_{dd}-L^+_{cd}-L^+_{dc}   ,
\end{eqnarray}
reproducing the two-point effective resistance of equation \eqref{eqn:2point_effectiveresistance_main}.
\subsubsection{Current correlations}
From these exact effective resistance values, we can solve for the current correlations in our effective circuit. The relation between the current in the edges can be determined via standard circuit graph techniques described above and by examining the sum of spanning trees that link $n_a$ and $n_b$ while spanning $e^1$ (see for instance \cite{Biggs_AlgGraphTheory_1974}). Here, we provide a different method to find correlations between currents without knowing the full network connectivity. We note that our effective circuit, which contains all possible paths between edges $e^0$ and $e^1$, can be redrawn as an unbalanced Wheatstone bridge circuit, with voltage sources of $\Delta V_0 = V_a - V_b$ and $\Delta V_1 = V_c-V_d$, respectively, as shown in Figure \ref{fig:effectiveResistance} (b) and (c).

We define a correlation for a unit of current in $e^0$, 
\begin{eqnarray}
\langle i_1, i_0\rangle=\frac{R_{a,d} R_{c,b}- R_{a,c} R_{b,d}}{(R_{a,d}+R_{b,d})(R_{a,c}+R_{c,b})R_1+
R_{a,d} R_{a,c} R_{b,c}+R_{a,c} R_{d,b} R_{b,c} + R_{a,d} R_{a,c} R_{d,b} + R_{a,d} R_{d,b} R_{b,c} }R_{0} \frac{R^\text{eff}_1}{R_1}i_0^2  .
\label{eqn:correlation}
\end{eqnarray}
Here $R_{a,b}$ are effective resistance values obtained from $L^\text{eff}_{a,b}$ and $L^\text{eff}_{b,a}$, e.g., by averaging.

We arrive at this correlation by noting that we can construct multiple subgraphs using the six effective resistances. We construct two unbalanced Wheatstone bridge circuits, and from these circuits we find a mapping that relates the voltages of the circuit nodes. If we know $V_c$ and $V_d$ we can find $V_a$ and $V_b$. i.e., $(V_c,V_d)\rightarrow (V_a,V_b)$. We find a mapping from $\Delta V_0$ to $\Delta V_1$, the forward map, and the inverse map from $\Delta V_1$ to $\Delta V_0$. We do this by removing the edge outside the Wheatstone bridge and solving for the voltage configurations, this is detailed in the supplementary material.
In the supplementary material, we explicitly show the two linear transformations for each unbalanced Wheatstone bridge configuration. The two linear mappings are,
\begin{eqnarray}
   \begin{pmatrix}
        V_c \\ V_d
    \end{pmatrix}
    &=& 
    Q
    \begin{pmatrix}
    V_a \\ V_b     
    \end{pmatrix}
     \\
      \begin{pmatrix}
        V_a \\ V_b
    \end{pmatrix}
    &=& 
    M 
    \begin{pmatrix}
    V_c \\ V_d     
    \end{pmatrix}  ,
\end{eqnarray}
we provide exact expressions for the matrix elements using the relationship $M=Q^{-1}$ in the supplementary material.

From these two linear mappings, we find a scaling relation between $i_0$ and $i_1$. There exist linear mappings for the currents:
\begin{eqnarray}
    f\left(Q\frac{R_0}{R_1} \right)&:& i_0 \rightarrow i_1
    \\
    f\left( Q^{-1} \frac{R_1}{R_0}\right) &:& i_1 \rightarrow i_0
\end{eqnarray}
The determinant of $Q$ and $ Q^{-1}$ give us a scaling relation for the current transformation, this is detailed in the supplementary material. We have
\begin{align}
\det{ Q\frac{R_{0}}{R_{1}}} &= \frac{R_{a,d} R_{c,b}- R_{a,c} R_{b,d}}{(R_{a,d}+R_{b,d})(R_{a,c}+R_{c,b})R^\text{eff}_1+
R_{a,d} R_{a,c} R_{b,c}+R_{a,c} R_{d,b} R_{b,c} + R_{a,d} R_{a,c} R_{d,b} + R_{a,d} R_{d,b} R_{b,c} }R_{0}\frac{R^\text{eff}_1}{R_1}
\label{eqn:TransformationDeterminant1}
\\
\det{  Q^{-1} \frac{R_{1}}{R_{0}}} &= \frac{(R_{a,d}+R_{b,d})(R_{a,c}+R_{c,b})R^\text{eff}_1+
R_{a,d} R_{a,c} R_{b,c}+R_{a,c} R_{d,b} R_{b,c} + R_{a,d} R_{a,c} R_{d,b} + R_{a,d} R_{d,b} R_{b,c} }{R_{a,d} R_{c,b}- R_{a,c} R_{b,d}}\frac{1}{R_{0}}\frac{R_1}{R^\text{eff}_1} ,
\label{eqn:TransformationDeterminant2}
\end{align}
 where $R_0$ and $R_1$ are the true resistance in edges $e^0$ and $e^1$ while $R^\text{eff}_1$ is the effective resistance for $e^1$ from $\tilde{R}^\text{eff}$ and in $Q$. Now, we have derived equation \eqref{eqn:correlation}, we have,
\begin{eqnarray}
     i_1= \det Q \frac{R_0}{R_1} i_0  .
\end{eqnarray}
 
 The two-point effective resistance is superadditive, and when all the effective resistance values in our effective circuit are positive, $R^{\text{eff}}_{(i,j)}+R^{\text{eff}}_{(j,k)}\geq R^{\text{eff}}_{(i,k)}$, e.g., $R_{a,d}+R_{b,d}\geq R_0$. When $R^\text{eff}_1$ is much greater than the other effective resistance values, we can simplify equation \eqref{eqn:TransformationDeterminant1}, 
\begin{equation}
    \det{ Q\frac{R_{0}}{R_{1}}} \leq
    \frac{R_{a,d} R_{c,b}- R_{a,c} R_{b,d}}{R_0 R_1 }   .
\end{equation}
We expect $R^\text{eff}_1$ to be large when there are few parallel paths between $n_c$ and $n_d$, as when it is an input or output edge for an otherwise densely connected network.

It is experimentally more realizable to work with the two-point effective resistance for each of the six edges in our effective circuit, i.e., the diagonal elements in $\tilde{R}^\text{eff}$. It is possible to estimate the current correlations using the two-point effective resistances in equations \eqref{eqn:TransformationDeterminant1} and \eqref{eqn:TransformationDeterminant2}. We can determine the error in approximating the conductivity $L^\text{eff}_{a,b}$ by the two-point effective resistance. Given that $\tilde{R}^\text{eff}$ it is not singular, we have
\begin{eqnarray}
L^\text{eff} &=& B\left(\tilde{R}^\text{eff}\right)^{-1}B^t \nonumber\\
&=&  B\left(R_d+\underline{R}\right)^{-1}B^t \nonumber \\
&=&  B R_d^{-1} B^t -B R_d^{-1}\underline{R}\left(R_d+\underline{R}\right)^{-1}B^t
\end{eqnarray}
where $R_d$ and $\underline{R}$ are the diagonal and off-diagonal elements of $\tilde{R}^\text{eff}$. We arrive at the last line using the Woodbury matrix identity. We note, in general, the matrix norm of $\underline{R}\left(R_d+\underline{R}\right)^{-1}$ is not restricted to values less than $1$, and thus the contribution from the off-diagonal terms can be significant. It remains to be determined under what network conditions the matrix norm is less than $1$. The resistance for an edge $e_{a,b}$ can be written
\begin{eqnarray}
R_{a,b} &=& \left(L^\text{eff}_{a,b}\right)^{-1} \nonumber 
\\ &=& \frac{1}{\left(B R_d^{-1}B^t\right)_{a,b}-\left(B R_d^{-1}\underline{R}\left(R_d+\underline{R}\right)^{-1}B^t\right)_{a,b} } \nonumber 
\\ &=& \frac{1}{\left(B R_d^{-1}B^t\right)_{a,b}}+ \frac{ \left(B R_d^{-1}\underline{R}\left(R_d+\underline{R}\right)^{-1}B^t\right)_{a,b}}{ \left( B R_d^{-1}B^t\right)_{a,b} \left( \left(B R_d^{-1}B^t\right)_{a,b}-\left(B R_d^{-1}\underline{R}\left(R_d+\underline{R}\right)^{-1}B^t\right)_{a,b}  \right)}
\end{eqnarray}
the first term on the right-hand side is two-point effective resistance for $e_{a,b}$; thus, the second term on the right is the error in approximating $R_{a,b}$ with the two-point effective resistance. The percent error will be proportionate to the inverse of the two-point effective resistance, $\left(\left(B R_d^{-1}B^t\right)_{a,b}\right)^{-1}$; thus, we expect this approximation to be valid for networks with high resistance edges.

We can write the equations of motions of the memory parameter, $x$, under bias; we can calculate the divergence of the resistance values from an initial homogeneous resistance state given the full effective resistance. In the direct parameterization, we have,   
\begin{eqnarray}
    \dot{x}_a(t_0)-\dot{x}_b(t_0)=  \frac{\Ron}{\beta}\left( \det Q \frac{R_b}{R_a}-1\right){i}_b(t_0)  .
       \end{eqnarray}

Similarly we can estimate the trajectory of $\Delta x_{a,b}$ as a function of time:

\begin{eqnarray}
    x_a(t)-x_b(t)
   = \int^t_{t_0} -\alpha \left(x_a(s)-x_b(s)\right) +\frac{\Ron}{\beta}
        \left( \det Q(s)\frac{R_b(s)}{R_a(s)}-1\right){i}_b(s)     ds   .
\end{eqnarray} 
In general, the appropriate dynamics of $R$, $Q$, and $\vec{i}$ will depend on the memory parameters $x$, and thus solving for the trajectory $\Delta x_{a,b}$ must be done in a self-consistent way.
This approach is well suited for passive resistant or memristive circuits, wherein there are no sources between the edges of interest as the contributions of voltage or current sources are not accounted for in this current correlation. This is the case within the bulk of most memristive device networks.

\section{Discussion}
In conclusion, the work presented here addresses many fundamental questions about the performance of memrisitve networks. The dynamics of memristor networks are nontrivial, and relating these dynamics to network structure is even more challenging. The dynamics of arbitrary memristor circuits with other two-terminal circuit elements are even more challenging to understand but are important as they produce even richer dynamics. In general it is not known whether a network will have stable dynamics, and when different circuits will have similar dynamics. In the work presented above we demonstrate techniques for answering many of these questions for a rich class of memristor networks.

Our research presents a comprehensive framework for analyzing meristor circuits, offering valuable insights into their dynamic behavior. We introduced techniques such as direct and flipped parameterization, extending to encompass resistance variations dependent on polynomial expansions of the internal memory parameters, $x$, through Bernstein polynomials.
Moreover, we elucidated diverse control schemes to deduce equations of motion, providing a thorough understanding of network dynamics.  We generalize the non-orthogonal projection operators to identify the equations of motion of circuits with non-linear two-terminal circuit elements.
 From these equations, the eigenvalues give insight into the network properties; in particular, the RLC networks display resonator behavior in the under-dampened case, and memristors enhance system dampening.

From the equations of motions for $x$, we derived Lyapunov functions and demonstrate that for memristors linear in $x$ the dynamics are passive and the equilibrium states are stable. Importantly this is not guaranteed for memristors nonlinear in $x$. In the case where $g(X)=\tanh(X)$, e.g., a suitable activation function in neural networks, we could guarantee the memristive device network would have passive dynamics. The Lyapunov functions provide a powerful way to study memristive device networks' complicated and often chaotic dynamics.

Additionally, our study delved into the role of invariances, shedding light on the extensive gauge freedom present in circuit components orthogonal to projector matrices. We found gauge transformations do not alter the short time dynamics or Lyapunov function stability for linear memristors. Spectral invariances implemented through an orthogonal transformation had similar passive Lyapunov functions, however they significantly impact the time derivative of these functions, highlighting the complex interplay between network properties and orthogonal components.

Moreover, we showed it is possible to devise an effective circuit to study the correlations across a network. We established a direct correspondence between the graph Laplacian and the non-orthogonal node projection operators, facilitating a deeper understanding of network properties.
Leveraging these techniques, we constructed effective circuits between distant edges, from which it is possible to estimate the divergence of the internal memory parameter in distant edges. This provides a method to understand how the resistance evolves across a network.

These techniques can have direct application in experiments wherein the full network conductivity cannot be accessed. Experiments are often restricted to four point probe measurements that characterize the effective resistance between contacts. Under these conditions, effective circuits can serve as a powerful tool to relate the electrical properties between distant edges. The effective circuits we presented can be generalized to study the relation between input and output edges in a neuromorphic network.

These results demonstrate the rich dynamics within memristive device networks. The strength of network analysis is that it can obtain many useful governing equations and relations in a complicated circuit. Further work is needed to generalize these Lyapunov functions and correlation analysis to networks with capacitors, inductors, sources, and sinks. Such a generalization would provide a set of powerful techniques to describe memristive device networks in different complex network structures. This approach offers valuable insights into resistance evolution within networks, paving the way for advancements in circuit analysis and design.


\section{Acknowledgements}
Ongoing work by Samip Karki identified the Schottky barrier parameterization. The work of F. Caravelli and F. Barrows was carried out under the auspices of the NNSA of the U.S. DoE at LANL under Contract No. DE-AC52-06NA25396. F. Caravelli was financed via DOE LDRD grant 20240245ER, while F. Barrows via Director's Fellowship. F. Barrows gratefully acknowledges support from the Center for Nonlinear Studies at LANL.
\section{Citations}
\bibliography{bibliography}

\section{Supplementary}

\subsection{Space of Circuit Variables}

Here, we prove that current and voltage configurations are dual and that there is a correspondence between the two representations of a circuit, the cycle space corresponding to the loops of the cycle matrices $A$, and the vertex space corresponding to the vertices of the incidence matrix $B$. 
We define four-spaces
\begin{itemize}
\item {\bf CURR-SP} (current space): The set of all current configurations.  (i.e. ${\cal N}(B)$, the null-space of $B$)
\item {\bf VOLT-SP} (voltage space): The set of all voltage configurations.  (i.e. ${\cal N}(A)$, the null-space of $A$)
\item {\bf CYCLE-SP} (cycle space): The set of all linear combinations of cycles.  (i.e. ${\cal R}(A)$, the row-space of $A$)
\item {\bf VERT-SP} (vertex space): The set of all linear combinations of rows of B.  (i.e., ${\cal R}(B)$, the row-space of $B$)
\end{itemize}
We note that every row of $A$ is a valid current configuration: for every vertex $k$ (row of $B$) and cycle $l$ (row of $A$), if $k\notin l$ then it does not contribute, and otherwise, one edge of the cycle is directed towards  $k$ while another is directed away, canceling. Thus $BA^t = 0$ and as a consequence ${\cal{R}}(B)\subset {\cal{N}}(A)$, thus:
\begin{align}
\text{{\bf CYCLE-SP} }\subset \text{{\bf CURR-SP}}.
\end{align}

Taking the transpose of the above matrix equation, $AB^t = 0$ and as a consequence ${\cal{R}}(A)\subset {\cal{N}}(B)$, thus:
\begin{align}
\text{{\bf VERT-SP} }\subset \text{{\bf VOLT-SP}}.
\end{align}
The next step will be to show that the opposite inclusions hold, and any current (voltage) may be written $A^t y$ ($B^t y$).

\subsubsection{Spanning Trees: $\text{{\bf VOLT-SP}} = \text{{\bf VERT-SP}}$}

Assume the graph is connected with $n$ vertices.  As such, $m \ge n-1$.  A \emph{tree} is a connected graph with no cycles. A \emph{spanning tree} of $\mathcal{G}$ contains all vertices $T = (V, E_T)$ where $E_T$ are the \emph{tree branches}.  The remaining edges are chords. Every spanning tree has $n-1$ branches.

Consider a spanning tree of a graph $\mathcal{G}$ and choose one node of this tree as the \emph{root} or \emph{ground node}. There is a unique path from the root to any node $k$. Define the orientation along this path from the root to $k$. Now, for a given voltage configuration, define $p_k$ as the sum of voltage elements along the path with the appropriate orientation (added if the orientation of the tree and $\mathcal{G}$ agree and subtracted if not). With this construction, for any edge $(k, l)$ the voltage is $v_e = p_k - p_l$ and we can write,
\begin{align}
v = B^t p, \quad \text{{\bf VOLT-SP}} \subset \text{{\bf VERT-SP}}
\end{align}

And we have thus shown $\text{{\bf VOLT-SP}} = \text{{\bf VERT-SP}}.$  As every voltage can be defined by $n-1$ numbers in the potential, we have
\begin{align}
\text{dim}\; \text{{\bf VOLT-SP}} = n-1
\end{align}

The relationship $v = B^t p$ is also analogous to writing a voltage configuration as a sum of Green's functions given by the rows of $B$ as each row corresponds to the potential configuration on the nodes, $p_k = \delta_{ki}$ such that the potential is 1 for node $i$ and zero elsewhere.

\subsubsection{Algebraic Methods: $\text{{\bf CURR-SP}} = \text{{\bf CYCLE-SP}}$}

As a consequence of the above argument,
\begin{align}
i^t v = i^t B^t p = (B i)p = 0
\end{align}
which is known as Tellegen's theorem. So, the current configurations live in an orthogonal space to the voltage configurations. We need to do a bit of work to make use of this.

First, pick a maximal set of the voltage configurations $v_1\dots v_{n-1}$ by using, for example, $n-1$ rows of $B$. Now pick a maximal set of linearly independent columns of $A$, $a_1\dots a_r$ such that $a_i = Au_i$ where $u_i$ is a unit vector. We claim, $u_1\dots u_r, v_1, \dots v_{n-1}$ are linearly independent.  Suppose there are $\lambda_i, \mu_j$ such that,
\begin{align}
\sum_i \lambda_i u_i + \sum_j \mu_j v_j = 0.
\end{align}
Acting with $A$, as $Av=0$, we have $\lambda_i = 0$ by the linear independence of $u_i$ and then $\mu_i=0$ by the linear independence of $v_j$. For an arbitrary $m$ vector $z$, we can write
\begin{align}A z = \sum_i \lambda _i a_i = Ay, \quad y = \sum_i \lambda_i u_i\end{align}
where we have used the linearly independent set of columns and unit vectors from above. Thus writing $z = (z-y) + y$ we decompose the vector in a piece $z-y \in \text{{\bf VOLT-SP}}$ and another which can be written in terms of $r$ unit vectors, proving,
\begin{align}m = n - 1 + r.\end{align}
We can thus form a linearly independent set of the $n-1$ voltage vectors and $r$ linearly independent rows of A, $v_1\dots v_{n-1},c_1\dots c_r$ with which we can decompose an arbitrary $m$ vector.

So, writing $i = c + v$ as a shorthand for the decomposition, we must have $i^t v = 0 = c^t v + v^t v.$ The first term must vanish as $c^t v = c^t B p = (B^t c)p = 0$ and thus $v = 0$. So, $i$ may be written as a linear combination of the rows of $A$, and we have shown the opposite inclusion, giving
\begin{align}\text{{\bf CURR-SP}} = \text{{\bf CYCLE-SP}}.\end{align}
Additionally, we have
\begin{align}\text{dim}\; \text{{\bf CURR-SP}} = m - n+1\end{align}

\subsection{Differential equation for the resistance}
We note that it is possible to write a differential equation for the physically accessible resistive state without referring to the internal parameters $ x$. In the case in which $R(x)$ is linear, this is rather straightforward and we have
\begin{equation}
    \frac{d}{dt} \vec R(t)=\tilde R \left(\alpha \frac{\vec R-\Ron}{\tilde R}-\frac{1}{\beta } \left(I+\xi\Omega  \frac{R-\Ron}{\tilde R}\right)^{-1} \Omega \vec S\right)
\end{equation}
where $\tilde R =\Roff-\Ron$. 

On the other hand, if $R(x)$ is not linear the equation is slightly more involved. We use
\begin{equation}
    \frac{d}{dt} R\left(x(t)\right)=\partial_x R(x) \frac{d}{dt} x(t)
\end{equation}
and thus
    \begin{equation}
    |\partial_x R(x)|^{-1}\frac{d}{dt} R\left(x(t)\right)= \frac{d}{dt} x(t)
\end{equation}
To write the equation as a function $R(t)$, we must invoke the convexity of $R(x)$ in $x$.
Let us define $R=f(x)$, and $x=f^{-1}(R)$. If $f^{-1}$ exists, e.g., if $f$ is an invertible function, then we can write the differential equation in terms of $R$ only. The function $f$ is invertible if, for instance, that $R(x)$ is a monotonic function on $R:[0,1]\rightarrow [\Ron,\Roff]$ as we expect in passive memristive devices.
We now use the theorem of the inverse function. Let us call $R=f(x)$, then it is not hard to see that since we assume that $f(x)$ is monotonic in $x$, we have
\begin{equation}
    \frac{1}{\partial_x R(x)}=\partial_R f^{-1}\left(R\right).
\end{equation}
We can thus write also in the nonlinear case, given a certain function $g(R)$
\begin{equation}
    \frac{d}{dt} \vec R(t)=g(R)\left(\alpha \vec G(R)-\frac{1}{\beta} (I+\xi \Omega G(R))^{-1} \Omega \vec S \right)
\end{equation}
where $G$, $g$, and $f$ are generic functions, with the condition that $g=\partial_R G$ and $G:[\Ron,\Roff]\rightarrow [0,1]$. 

\subsection{Nonlinear Lyapunov functions differential equations}
The differential equation given by the quadratic and external field terms, equations \eqref{eqn:QuadraticTerm} and \eqref{eqn:ExternalFieldTerm}, are linear nonhomogeneous differential equations of the first kind, of the type
\begin{equation}
    y'+p(z)y=f(z)
\end{equation}
with  $f(z)=\frac{G_n(z)}{z}$ in both cases, while $p(z)=\frac{2}{z}$ for the quadratic term and $p(z)=\frac{1}{z}$ for the external field term. In this case, by writing $y=u v$ we solve separately for the two equations
\begin{eqnarray}
v'(z)+p(z) v(z)&=&0  \\
u'(z)&=&\frac{f(z)}{v(z)}
\end{eqnarray}
The first equation is solved via
\begin{equation}
    \frac{d}{dz} \log v(z)=- p(z)\rightarrow v(z)=v(1) e^{-\int^z p(q) dq}
\end{equation}
and since $p(q)=\frac{k}{q}$, with $k=1,2$ we have $v(z)=\frac{v(1)}{z^2}$ for the function $F$ and $v(z)=\frac{v(1)}{z}$ for the external field.
Thus we have
\begin{equation}
    u'(z)= \frac{z^k}{v(1)} f(z),
\end{equation}
whose solution is 
\begin{equation}
    u(z)=u_0+\frac{1}{v(1)}\int^z  q^k f(q) dq.
\end{equation}

These are used to arrive at the forms of $F(z)$ and $Q(z)$ used in the main text.

\subsection{Explicit expressions for non-orthogonal projectors}
We would like to derive the properties of the non-orthogonal projector 
$\Omega_{B/R^{-1}}=B^t (B R^{-1} B^t) ^{-1} B R^{-1}$ in terms of the orthogonal projector operator $\Omega_B=B^t (B B^t) ^{-1} B$.
Let us write
\begin{eqnarray}
    \Omega_{B/R^{-1}}R&=&B^t (B R^{-1} B^t) ^{-1} B \nonumber \\
    &=&R_{\text{on}} B^t \left(B \left(\frac{R}{R_{\text{on}}}\right)^{-1} B^t\right) ^{-1} B,
\end{eqnarray}
let us write $\tilde R=\frac{R}{R_{\text{on}}}=I+\xi G(\vec X)$. We have
\begin{eqnarray}
    \Omega_{B/R^{-1}}R&=&R_{\text{on}}B^t\left(B(I+\xi G(X))^{-1}B^t \right)^{-1}B \nonumber \\
    &=&R_{\text{on}}B^t\left(\sum_{k=0}^\infty (-1)^k \xi^k Bg^k B^t \right)^{-1}B \nonumber \\
    &=&R_{\text{on}}B^t\left(B B^t+B ZB^t\right)^{-1} B
\end{eqnarray}
where $Z=\sum_{k=1}^\infty (-1)^k \xi^k G^k$, as $\xi$ is a constant.
We now use the formula $(P+Q)^{-1}=\sum_{k=0}^\infty (-1)^k (P^{-1} Q)^k P^{-1}$.
We obtain, for $P=B B^t$ and $Q=B ZB^t$,
\begin{eqnarray}
    \Omega_{B/R^{-1}}R&=&R_{\text{on}}\sum_{k=0}^\infty (-1)^k B^t\left((B B^t)^{-1} B Z B^t)^k (B B^t)^{-1} B\right) \nonumber \\
    &=&R_{\text{on}}\sum_{k=0}^\infty (-1)^k \left(B^t(B B^t)^{-1} B Z \right)^k B^t(B B^t)^{-1} B \nonumber \\
    &=&R_{\text{on}}\sum_{k=0}^\infty (-1)^k (\Omega_B Z)^{k} \Omega_B \nonumber \\
    &=& R_{\text{on}}(I+\Omega_B Z)^{-1} \Omega_B
\end{eqnarray}
and thus
\begin{eqnarray}
    \Omega_{B/R^{-1}}&=&(I+\Omega_B (\tilde R^{-1}-I))^{-1} \Omega_B \tilde R^{-1} \nonumber \\
    &=&(\Omega_A+\Omega_B \tilde R^{-1})^{-1} \Omega_B \tilde R^{-1}
\end{eqnarray}

Here $\Omega_A$ is the orthogonal projection operator of $\Omega_B$ such that $\Omega_B+\Omega_A=I$. Let us prove that the equation defines a projector operator. 
Let us define for simplicity $\tilde R^{-1}-I=q$. In order to prove that the equation above is an identity, it is sufficient to observe that
\begin{equation}
    [\Omega_B (q+I)][(I+\Omega_B q)^{-1}]=\Omega_B.
\end{equation}
In fact, since $\Omega_B^2=\Omega_B$, we have
\begin{equation}
    \Omega_B (q+1)=\Omega_B+\Omega_B^2 q=\Omega_B+\Omega_B q,
\end{equation}
which proves the equality.
Then, we have
\begin{eqnarray}
\Omega_{B/R^{-1}}^2&=&(I+\Omega_B q)^{-1} \Omega_B (q+I)(I+\Omega_B q)^{-1} \Omega_B (q+I) \nonumber \\
&=&(I+\Omega_B q)^{-1} \Omega_B\Omega_B (q+I) \nonumber \\
&=&(I+\Omega_B q)^{-1} \Omega_B (q+I)=\Omega_{B/R^{-1}},
\end{eqnarray}
which shows that the formula we derived defines a projector operator.
Note that if $R=I$, then the formula becomes an orthogonal projector operator.
Let us now construct the orthogonal complement, which is defined as
\begin{equation}
    \Omega_{B/R^{-1}}+\Omega^\prime= I.
\end{equation}
We see that
\begin{eqnarray}
    \Omega^\prime&=& I-\Omega_{B/R^{-1}} \nonumber \\
    &=&(I+\Omega_B (\tilde R^{-1}-I))^{-1} (I+\Omega_B (\tilde R^{-1}-I)-\Omega_B \tilde R^{-1}) \nonumber \\
    &=&(I+\Omega_B (\tilde R^{-1}-I))^{-1} (I-\Omega_B) \nonumber \\
    &=&(\Omega_A+\Omega_B \tilde R^{-1})^{-1} \Omega_A.
\end{eqnarray}
In the case in which the projector is orthogonal, we also see in this case that for $\tilde R=I$, $\Omega^\prime=\Omega_A$ which is the right complementary projector to $\Omega_B$.
In the memristive RLC dynamics, we will need explicit expressions both for $\Omega_{B/R^{-1}}$ and $I-\Omega_{B/R^{-1}}$. 

\subsection{ Power in circuit cycles}

The matrix $K=A R A^t$ can be used to calculate the power within the cycles of a memristive circuit via a mesh analysis. In the mesh analysis, the memristive cycles form a planar sub-circuit that is partitioned into fundamental cycles, all with the same orientation. Cycles overlap in no more than one edge and each of these fundamental cycles is assigned a unique current. In this case, $P=\vec{i}^t_m K \vec{i}_m$, where $\vec{i}_m$ is the mesh current. In this case, the off-diagonal elements of $K$ will be negative and we can write the power in the mesh analysis as,
\begin{eqnarray}
P &=& \vec{i}^t_m K \vec{i}_m
\nonumber \\
&=& \sum_{c_i} \sum_k R^{c_i}_k i_{c_i}^2 - 2 \sum_{c_i < c_j} R^{c_i,c_j} i_{c_i}i_{c_j}
\nonumber \\
&=& \sum_{c_i} R^{c_i,c_i} i_{c_i}^2+\sum_{c_i< c_j} R^{c_i,c_j}(i_{c_i}-i_{c_j})^2 .
\end{eqnarray}
 
 Here, $i_{c_i}$ is the current in a cycle $c_i$ in the mesh analysis, $\sum_k R^{c_i}_k$ is the sum of all $k$ resistive elements in a cycle $c_i$, $R^{c_i,c_i}$ and $R^{c_i,c_j}$ are resistive elements that exist exclusively in $c_i$ or are shared in both $c_i$ and $c_j$, respectively. The off-diagonal elements in $K$ are the $R^{c_i,c_j}$ resistors.

 For example, in a simple mesh circuit consisting of two cycles with three resistors shown in Figure \ref{fig:2Mesh}, then
 \begin{eqnarray}
     \vec{i}_m &=& \begin{pmatrix}
         i_1 \\ i_2
     \end{pmatrix} 
     \\
     K &=& \begin{pmatrix}
         R_0+R_1 & -R_1 \\
         -R_1 & R_1+R_2
     \end{pmatrix} ,
 \end{eqnarray}
and the power in the cycles is
\begin{eqnarray}
    P &=& \vec{i}^t_m K \vec{i}_m \nonumber \\
    &=& R_0 i_1^2+R_2 i_2^2+R_1(i_1-i_2)^2 ,
\end{eqnarray}
which is the power we would expect from mesh currents.

\begin{figure}[h]
 \includegraphics[width=.4\textwidth]{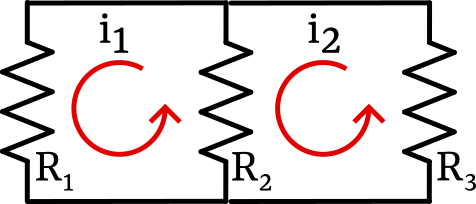}
 \caption{A small circuit consisting of two cycles with currents $i_1$ and $i_2$,and three resistors. The mesh analysis cycles are shown with red arrows.}
    \label{fig:2Mesh}
\end{figure}
 
\subsection{Effective power loss}
We can rewrite the power dissipation due to Joule heating in our effective circuit using $\tilde{R}^\text{eff}$ from the main text, 
\begin{eqnarray}
    P &=& \vec{i}^t\tilde{R}^\text{eff} \vec{i} \nonumber \\
    &=& \sum_{i} \vec{i}_i^2 \tilde{R}^\text{eff}_{i,i}+\sum_{i,j} \vec{i}_i\vec{i}_j \tilde{R}^\text{eff}_{i,j}  .
\end{eqnarray}
As $\tilde{R}^\text{eff}$ has off-diagonal terms, here we present a mean-field treatment to calculate the power dissipation when the current in each edge can be treated as nearly uniform with a local fluctuation, e.g., $\vec{i_i}=j_0+\delta \vec{i}_i$, $j_0$ is the mean current and $\delta \vec{i}_i$ is the small local fluctuation in current. We rewrite the power dissipation,
\begin{eqnarray}
    P &=& \sum_i j_0^2(\tilde{R}^\text{eff}_{ii}+\sum_j \tilde{R}^\text{eff}_{ij})+2j_0\delta\vec{i}_i(\tilde{R}^\text{eff}_{ii}+\sum_j \tilde{R}^\text{eff}_{i,j}) \nonumber \\ 
    &=& -\sum_i j_0^2(\tilde{R}^\text{eff}_{ii}+\sum_j \tilde{R}^\text{eff}_{ij})+2j_0\vec{i}_i(\tilde{R}^\text{eff}_{ii}+\sum_j \tilde{R}^\text{eff}_{i,j}) \nonumber \\
     &=& -\sum_i (j_0-\vec{i}_i)(j_0-\vec{i}_i)(\tilde{R}^\text{eff}_{ii}+\sum_j \tilde{R}^\text{eff}_{ij})+\vec{i}_i^2(\tilde{R}^\text{eff}_{ii}+\sum_j \tilde{R}^\text{eff}_{i,j}) \nonumber\\
     &=& \sum_i \vec{i}_i^2(\tilde{R}^\text{eff}_{ii}+\sum_j \tilde{R}^\text{eff}_{i,j}) .
\end{eqnarray}
In the first line we have neglected second-order fluctuation terms, and in the second line we have used $\delta \vec{i}_i=\vec{i}_i-j_0$. In the third line we have completed the square, and in the fourth line we again have neglected second-order fluctuations terms. Thus we can see the effective resistance $R_{i}=(\tilde{R}^\text{eff}_{ii}+\sum_j \tilde{R}^\text{eff}_{i,j})$, the sum of a row in $\tilde{R}^\text{eff}$. This approximation is valid only when the current fluctuations are small. 

\subsection{Correlation Analysis}
\subsubsection{Two-point effective resistance by a Markov process}
Before we find the effective resistance of a circuit, it is helpful to work with the nodes of the circuit. The movement of charge in a network can be studied via a Markov process, wherein current jumps between nodes with a probability depending on the conductivity of the edges. We build a Markov matrix $P$, the matrix elements encode the normalized probabilities that a charge moves between adjacent nodes; $P(a,b)$ is the probability of moving from node $n_a$ to node $n_b$. 
\begin{eqnarray}
    P(a,b) &=& \frac{G_{ab}}{\sum_c G_{ac}}
    \nonumber\\ &=& \bar{G}_{a,b}
\label{eqn:MarkovMatrix}
\end{eqnarray}
Here $G_{ab}$ is the conductivity of the edge $e_{ab}$, and the sum over $c$ is the sum over all edges incident on $n_a$. $\bar{G}_{a,b}$ is the conductivity of $e_{ab}$ normalized by the conductivity of all edges incident on $n_a$. The matrix $P$ has a zero diagonal.

We gain insight into the current in a memristive device network by studying the paths charged particles can take between nodes. The probability of walks in this network (paths that the current could take) can be calculated from $P$. The total probability of all $k$-step long walks, $p_w$, that starts at $n_a$ and reaches a distant node $n_b$ can be calculated,
\begin{eqnarray}
p_w= P^{k}_{a,b}  .
\end{eqnarray}
As the charge moves through the network in a fast time scale compared to the change in the memristive device, we need to examine long walks. Given the values of $P(a,b)$ are small we can write the sum of all the possible walks in the network, $\sum^\infty P^k$, as:
\begin{eqnarray}
    \left(I-P\right)^{-1} &=& \frac{1}{\bar{L}}
    \nonumber\\
    &=& \bar{L}^+
\end{eqnarray}
Here, $\bar{L}$ is the normalized Laplacian matrix. In the case of homogenous resistors, the weights are equal to the inverse degree of each edge, and in this case, $\bar{L}$ reduces to the standard normalized Laplacian matrix. In the more general case, the probability is the normalized conductivity of all edges connected to a given node, and the diagonal elements are $1$, as given in 
\begin{eqnarray}
\bar{L}= \text{diag}(B G B^t)\cdot(B G B^t) ,   
\label{eqn:Laplacioan_normed}
\end{eqnarray} 
here $\text{diag}(BGB^t)$ is the diagonal elements of $BGB^t$, which normalizes the conductivity. 
We arrive at equation \eqref{eqn:Laplacioan_normed} using Theorem \ref{thm:Laplacian} and dividing each row by the corresponding diagonal elements of the matrix such that the off-diagonal elements of each row sum to $-1$ and each row sums to $0$.

If we define
$\bar{L}^+$ to be the pseudoinverse of $\bar{L}$, then we have
\begin{eqnarray}
\bar{L}^+=(B G B^t)^{-1}\cdot \text{diag}(B G B^t).
\label{eqn:Laplacioan_normed_inverse}
\end{eqnarray}

The effective resistance is a resistance such that if the entire network was removed, leaving only nodes $n_a$ and $n_b$ with an edge $(a,b)$ with $R^\text{eff}$, the potential difference and corresponding electrical flows between these nodes would be invariant. As such, we can find an effective circuit in order to calculate the correlation between electrical currents in non-adjacent edges. The effective resistance, $R^\text{eff}$, between the $n_a$ and $n_b$ can be found by examining the $(a,b)$ walks and removing the closed cycle walks which return to $n_a$, $(a,a)$, and similarly examining the reverse walk from $(b,a)$:
\begin{eqnarray}
    p_{a,b} &=& -\bar{L}^+(a,b)-\bar{L}^+(b,a)+\bar{L}^+(a,a)+\bar{L}^+(b,b)
    \\
    R^{\text{eff}}_{(a,b)}(a,b) &=& -\frac{\bar{L}^+(a,b)}{(B G B^t)(b,b)}-\frac{\bar{L}^+(b,a)}{(B G B^t)(a,a)}+\frac{\bar{L}^+(a,a)}{(B G B^t) (a,a)}+\frac{(B G B^t)(b,b)}{(B G B^t)(b,b)} \nonumber
    \\
    &=& -L^+(a,b)-L^+(b,a)+L^+(a,a)+L^+(b,b)
    \label{eqn:2point_effectiveresistance_supp}
\end{eqnarray}

Here, $L^+$ is the unnormalized inverse Laplacian; we have found the effective resistance by examining the walks in $\bar{L}^+$ and dividing by the normalization factor to restore the resistance values.

\subsubsection{Current Correlation in a Random Walk}

 Here we examine a toy model wherein current moves through the memristive device network via a Markov process\cite{DoyleSnell_Arxiv_2000}. As we are interested in studying the correlation of currents in edges, we build a Markov matrix, $P$, wherein charge jumps between edges. In effect, we transform the network into the line graph wherein edges in our original network are transformed into nodes. Edges that are adjacent in our network become adjacent nodes in our new graph; these new nodes are connected by asymmetric edges.
 A Markov matrix, $P$, encodes the probability of a stochastic process, e.g., a particle flowing through the network randomly without external bias, diffusing in the line graph. The probability depends on the conductivity of the edges:
\begin{eqnarray}
    P(a,b) &=& \frac{G_{ab}}{\sum_c G_{ac}}
    \nonumber \\ &=& \bar{G}_{a,b}
\label{eqn:MarkovMatrix2}
\end{eqnarray}
Here $G_{ab}$ is the conductivity of edge $e^b$ in our original network, which is adjacent to edge $e^a$, and $G_{ba}$ is the conductivity of edge $e^a$ in our original network (adjacent to $e^b$). The sum over $c$ in equation \eqref{eqn:MarkovMatrix2} is over all edges $e^c$ adjacent to $e^a$ in our original network.  $\bar{G}_{a,b}$ is the conductivity of $e^b$ adjacent to $e^a$ normalized by the conductivity of all edges adjacent to $e^a$. We can see that the matrix $P$ has a zero diagonal as above, but now $P$ is asymmetric.

As our random walk occurs via all-or-nothing jumps, we set a bound on the correlation of current in our network by underestimating the probability that current will propagate through the network. In our model, a particle can move in either direction on a given edge without accounting for the conductivity of the edge, i.e., a particle is equally likely to hop from either end of a given edge. Thus we overestimate the probability the charge will, in effect, be reflected by an edge.

The probability $h_{(a,b)}$ is the probability a particle walks from edge $e^a$ and travels to a distant edge $e^b$ along the shortest path; now $e^b$ is not necessarily directly adjacent to $e^a$. The random walk probability $h_{(a,b)}$ can be calculated from $P$,
\begin{eqnarray}
    h_{(a,b)}&=&\sum_{k=1} P^k\,_{a,b} \; \vert \; \sum P^{k-1} \, _{a,b}=0  .
\end{eqnarray}

In a network, charge will not flow in cycles but instead flow in a directed manner until reaching an equilibrium. Thus, we are only interested in finding the shortest path when the charge does not return to any edge it has traversed. For example, in a simple square circuit with homogeneous resistance:
\begin{eqnarray}
    P=\begin{pmatrix}
        0 & \frac{1}{2} & 0 & \frac{1}{2} \\
        \frac{1}{2} & 0 & \frac{1}{2} &0\\
        0 & \frac{1}{2} & 0 & \frac{1}{2} \\
         \frac{1}{2} & 0 & \frac{1}{2} &0\\
    \end{pmatrix}
\end{eqnarray}
in this case $h_{(a,b)}=2 \frac{1}{2}^2 $. This random walk probability, $h_{(a,b)}$ is a minimum probability a particle injected in $e^a$ will travel through $e^b$. This can be seen by noting that for a unit charge injected in $e^{a}$, then $P(a,b)$ is the probability the charge will flow through an adjacent $e^b$. $P(b,c)$ is then the probability that charge will move from $e^b$ to an adjacent edge $e^c$. Note that in this treatment, a charge is unbiased in the direction in which it flows; a charge injected into any edge will move to adjacent edges based solely on the conductivity of the adjacent edges.  

 In the systems we study, we treat the current as quasistatic as it equilibrates between edges and traverses the network at a fast time scale compared to the change in the resistance in the memristive devices. Thus, $h_{(a,b)}$ is a correlation of current in edge $e^a$ and $e^b$ under random walk conditions, e.g., when the current flow is unbiased. Given a known current ${i}_a$ in $e^a$, we can get a bound on the current  ${i}_b$ in $e^b$. We have
\begin{eqnarray}
    h(a,b)i_a\leq i_b\leq \frac{i_a}{h(b,a)}
\end{eqnarray}
For example, if $h_{(a,b)}=h_{(b,a)}=1$ the current is equivalent in edges $e^a$ and $e^b$, ${i}_a={i}_b$, if $h_{(a,b)}=h_{(b,a)}=0$ the current in the two edges are independent.  
The upper bound is found from the reverse walk: $h(b,a)i_b\leq i_a$, as $P$ is asymmetric $h_{(a,b)}\neq h_{(a,b)}$. 

These are inequalities as probability of moving between adjacent edges is smaller in the random walk case than the probability of moving between adjacent edges in the directed walk case; in the random walk case, there is a probability of moving in either direction along an edge. In addition, here we are just examining the probability that current injected at a single edge $e^a$ transmits through $e^b$. Under normal conditions, multiple edges could contribute to the current of any individual edge.

This correlation analysis can offer insight into correlations of the resistance values in distant memristive devices. Given a network of memristive devices initialized with a homogenous resistance, $X(t_0)_{k,k}=x_0$, the rate at which the memory parameter $x$ diverges in distant edges depends on $h(a,b)$. In the direct parameterization, we have,
\begin{eqnarray}
  \frac{\Ron}{\beta} \left( h_{(a,b)}(t_0)-1\right){i}_b(t_0)  \leq  \dot{x}_a(t_0)-\dot{x}_b(t_0) 
        \leq \frac{\Ron}{\beta} \left( h_{(b,a)}(t_0)^{-1}-1\right){i}_b(t_0)   .
\end{eqnarray}
We can simply put bounds on the divergence of $\Delta x_{a,b}=x_a-x_b$. Similarly, we can put bounds on the trajectories of $\Delta x_{a,b}$, which relates the difference of resistance in distant memristive devices:
\begin{align}
  \int^t_{t_0} -\alpha \left(x_a(s)-x_b(s)\right) +\frac{\Ron}{\beta} \left( h_{(a,b)}(s)-1\right){i}_b(s) ds   \leq \Delta x_{a,b}(t) 
        \leq \int^t_{t_0} -\alpha \left(x_a(s)-x_b(s)\right) +\frac{\Ron}{\beta} \left( h_{(b,a)}(s)^{-1}-1\right){i}_b(s)  ds 
\end{align}

In general, the appropriate dynamics of $R$, $h$, and $\vec{i}$ will depend on the memory parameters $x$, and thus solving for the trajectory $\Delta x_{a,b}$ must be done in a self-consistent way. 
Inherent in the analysis thus far is the charged particles are traveling in a network nearly randomly; inhomogeneities in the current distributions are due solely to network connectivity. In this treatment, the voltage in the network does not strongly bias the propagation of electrons. This is only valid under low voltage bias such that the potential in the network can be considered locally flat; in effect, particles are being injected in an edge, but there is no bias in the network. In the main text, the case of a circuit under inhomogeneous applied bias in a network with spatially varying resistance is studied.

\subsubsection{Wheatstone bridges}

We have two Wheatstone bridge circuits, which are equivalent representations of the circuit built from six effective resistances, as described in the main text. In one circuit, we can treat the edge with a known current, $e^0$, as a current injector that connects the ends of the Wheatstone bridge and the unknown edge, $e^1$, is the bridge. In the other circuit, the unknown edge, $e^1$, is the source that connects the ends of the Wheatstone bridge, where the bridge is now $e^0$.

In order to solve for valid current configurations, we remove one edge of our circuit; in both cases, we remove the edge linking the ends of our Wheatstone bridge, the source edge. In the first case, we have $(V_a,V_b)$ at the top and bottom of our Wheatstone bridge, with $(V_c,V_d)$ on the ends of our bridge, as shown in Figure \ref{fig:effectiveResistance} (b) and (c) in the main text. We use an incidence matrix $B$ to solve for $(V_a,V_b)$ in terms of $(V_c,V_d)$. Similarly, in the second circuit the positions of $(V_c,V_b)$ and $(V_a,V_b)$ are reversed and we solve for $(V_c,V_b)$ in terms of $(V_a,V_b)$. The following linear mappings are found,

\begin{align}
    \begin{pmatrix}
        V_a \\ V_b
    \end{pmatrix}
    &=
    \begin{pmatrix}
        \frac{R_{1} R_{2} R_{3} + R_{1} R_{2} R_{4} + R_{1} R_{2} R_{5} + R_{1} R_{4} R_{5}}{R_{5} \left(R_{1} R_{4} - R_{2} R_{3}\right)} & \frac{- R_{1} R_{2} R_{3} - R_{1} R_{2} R_{4} - R_{1} R_{2} R_{5} - R_{2} R_{3} R_{5}}{R_{5} \left(R_{1} R_{4} - R_{2} R_{3}\right)}\\\frac{- R_{1} R_{3} R_{4} - R_{2} R_{3} R_{4} - R_{2} R_{3} R_{5} - R_{3} R_{4} R_{5}}{R_{5} \left(R_{1} R_{4} - R_{2} R_{3}\right)} & \frac{R_{1} R_{3} R_{4} + R_{1} R_{4} R_{5} + R_{2} R_{3} R_{4} + R_{3} R_{4} R_{5}}{R_{5} \left(R_{1} R_{4} - R_{2} R_{3}\right)}
    \end{pmatrix} \begin{pmatrix}
        V_c \\ V_d
    \end{pmatrix}
    \label{eqn:LinearTransformM}
    \\
    \begin{pmatrix}
    V_c \\ V_d    
    \end{pmatrix}
    &=
    \begin{pmatrix}\frac{R_{1} R_{3} R_{4} + R_{1} R_{4} R_{5} + R_{2} R_{3} R_{4} + R_{3} R_{4} R_{5}}{R_{1} R_{2} R_{3} + R_{1} R_{2} R_{4} + R_{1} R_{2} R_{5} + R_{1} R_{3} R_{4} + R_{1} R_{4} R_{5} + R_{2} R_{3} R_{4} + R_{2} R_{3} R_{5} + R_{3} R_{4} R_{5}} & \frac{R_{1} R_{2} R_{3} + R_{1} R_{2} R_{4} + R_{1} R_{2} R_{5} + R_{2} R_{3} R_{5}}{R_{1} R_{2} R_{3} + R_{1} R_{2} R_{4} + R_{1} R_{2} R_{5} + R_{1} R_{3} R_{4} + R_{1} R_{4} R_{5} + R_{2} R_{3} R_{4} + R_{2} R_{3} R_{5} + R_{3} R_{4} R_{5}}\\\frac{R_{1} R_{3} R_{4} + R_{2} R_{3} R_{4} + R_{2} R_{3} R_{5} + R_{3} R_{4} R_{5}}{R_{1} R_{2} R_{3} + R_{1} R_{2} R_{4} + R_{1} R_{2} R_{5} + R_{1} R_{3} R_{4} + R_{1} R_{4} R_{5} + R_{2} R_{3} R_{4} + R_{2} R_{3} R_{5} + R_{3} R_{4} R_{5}} & \frac{R_{1} R_{2} R_{3} + R_{1} R_{2} R_{4} + R_{1} R_{2} R_{5} + R_{1} R_{4} R_{5}}{R_{1} R_{2} R_{3} + R_{1} R_{2} R_{4} + R_{1} R_{2} R_{5} + R_{1} R_{3} R_{4} + R_{1} R_{4} R_{5} + R_{2} R_{3} R_{4} + R_{2} R_{3} R_{5} + R_{3} R_{4} R_{5}}\end{pmatrix}
    \begin{pmatrix}
        V_a \\ V_b
    \end{pmatrix}
    \label{eqn:LinearTransformQ}
\end{align}
The effective resistances here are $R_{a,b}=R_0$, $R_{a,d}=R_1$, $R_{a,c}=R_2$, $R_{b,d}=R_3$, $R_{b,c}=R_4$, and $R_{c,d}=R_5$. These are the linear transformations we use in the main text, with the matrices in equations \eqref{eqn:LinearTransformM} and \eqref{eqn:LinearTransformQ} corresponding to $M$ and $Q$, respectively. We note that $M=Q^{-1}$.

\subsubsection{Linear mapping between voltages}
Here we detail how the linear mappings between voltage in our unbalanced Wheatstone bridge circuit are used to determine a correlation between currents $i_0$ and $i_1$, in edges $e^0$ and $e^1$, with resistance $R_0$ and $R_1$, respectively. From the main text we have,
\begin{eqnarray}
   \begin{pmatrix}
        V_c \\ V_d
    \end{pmatrix}
    &=& 
    Q
    \begin{pmatrix}
    V_a \\ V_b     
    \end{pmatrix}
     \\
      \begin{pmatrix}
        V_a \\ V_b
    \end{pmatrix}
    &=& 
    Q^{-1}
    \begin{pmatrix}
    V_c \\ V_d     
    \end{pmatrix}  
\end{eqnarray}
We can use these mappings to calculate the currents from the resistance $R_0$ and $R_1$,
\begin{eqnarray}
    \begin{pmatrix}
        i_1 \\ 0
    \end{pmatrix} &=& \frac{1}{R_1} 
    \begin{pmatrix} 1 & -1 \\ 0 & 0
    \end{pmatrix} Q \begin{pmatrix} V_a \\ V_b \end{pmatrix}
   \nonumber \\
    &=& \frac{1}{R_1} 
    \begin{pmatrix} 1 & -1 \\ 0 & 0
    \end{pmatrix}  \begin{pmatrix} V_c \\ V_d \end{pmatrix}
    \\
    \begin{pmatrix}
        i_0 \\ 0
    \end{pmatrix} &=& \frac{1}{R_0} 
    \begin{pmatrix} 1 & -1 \\ 0 & 0
    \end{pmatrix} Q^{-1} \begin{pmatrix} V_c \\ V_d \end{pmatrix}
    \nonumber \\
     &=& \frac{1}{R_0} 
    \begin{pmatrix} 1 & -1 \\ 0 & 0
    \end{pmatrix}  \begin{pmatrix} V_a \\ V_b \end{pmatrix}    
\end{eqnarray}
 We can rearrange these matrices to find a mapping between electrical currents, which we use to find a ratio between $i_0$ and $i_1$, 
\begin{eqnarray}
   \begin{pmatrix}i_0 \\ 0 \end{pmatrix} &=& \frac{R_1}{R_0}\begin{pmatrix} 1 & -1 \\ 0 & 0
    \end{pmatrix}  Q^{-1} \begin{pmatrix} 1 & -1 \\ 0 & 0
    \end{pmatrix}^+\begin{pmatrix}i_1 \\ 0 \end{pmatrix}     
    \\
    i_1 &=& 2\frac{R_0}{R_1} \frac{Q(0,0)Q(1,1)-Q(0,1)Q(1,0)}{Q(0,0)+Q(0,1)+Q(1,0)+Q(1,1)} i_0 \nonumber
    \\ &=& \frac{R_0}{R_1} \det{(Q)} i_0
    \label{eqn:ChargeRatio1}
\end{eqnarray}
The superscript $^+$ is the pseudo-inverse. Note the denominator, $Q(0,0)+Q(0,1)+Q(1,0)+Q(1,1)$, cancels the factor of $2$. We can go through the same process to obtain the inverse mapping:
\begin{eqnarray}
   i_0 &=& \frac{R_1}{R_0} \det{(Q^{-1})} i_1 .
\label{eqn:ChargeRatio2}
\end{eqnarray}
If we are interested in the correlation between currents $i_0$ and $i_1$, we treat $i_0$ as a unit charge and equations \eqref{eqn:ChargeRatio1} and \eqref{eqn:ChargeRatio2} reproduce the current transformation from the main text. We can write,
\begin{eqnarray}
    i_0i_1 &=& \frac{R_1}{R_0}\det{(Q^{-1})}i_1^2
   \nonumber  \\
    &=& \frac{R_0}{R_1}\det{(Q)}i_0^2  ,
\end{eqnarray}
from which we define the dimensionless correlation $\langle i_1,i_0\rangle$ in the main text for unit current in $e^0$.

\end{document}